\newcommand{\fpfrs}{\textsc{Fan-Planarity with Fixed Rotation System}}
\newcommand{\threepart}{\textsc{3-Partition}}
\title{On the Recognition of Fan-Planar and\\ Maximal Outer-Fan-Planar Graphs
\thanks{This work started at the \emph{Bertinoro Workshop on Graph Drawing 2014}. We thank the organizers and the participants of the workshop for the useful discussions on this topic. The work of M.A.~Bekos is implemented within the framework of the Action ``Supporting Postdoctoral Researchers'' of the Operational Program ``Education and Lifelong Learning'' (Action's Beneficiary: General Secretariat for Research and Technology), and is co-financed by the European Social Fund (ESF) and the Greek State. L.~Grilli was partly supported by the MIUR project AMANDA ``Algorithmics for MAssive and Networked DAta'', prot. 2012C4E3KT\_001. S. Hong was partly supported by her ARC Future Fellowship and Humboldt Fellowship.}
}
\author{
M.~A.~Bekos\inst{1},
S.~Cornelsen\inst{2},
L.~Grilli\inst{3},
S.-H.~Hong\inst{4},
M.~Kaufmann\inst{1}
}
\date{}
\institute{%
    Wilhelm-Schickard-Institut f\"ur Informatik, Universit\"at T\"ubingen, Germany\\
    \email{$\{$bekos,mk$\}$@informatik.uni-tuebingen.de} \and
    Dept. of Computer and Information Science, University of Konstanz, Germany\\
    \email{sabine.cornelsen@uni-konstanz.de} \and
    Dipartimento di Ingegneria, Universit\`a degli Studi di Perugia, Italy\\
    \email{luca.grilli@unipg.it} \and
    School of Information Technologies, University of Sydney, Australia\\
    \email{shhong@it.usyd.edu.au}
}
\begin{document}
\maketitle

\begin{abstract}
\emph{Fan-planar} graphs were recently introduced as a
generalization of \mbox{$1$-\emph{planar}} graphs. A graph is
\emph{fan-planar} if it can be embedded in the plane, such that each
edge that is crossed more than once, is crossed by a bundle of two
or more edges incident to a common vertex. A graph is
\emph{outer-fan-planar} if it has a fan-planar embedding in which
every vertex is on the outer face. If, in addition, the insertion of
an edge destroys its outer-fan-planarity, then it is \emph{maximal
outer-fan-planar}.

In this paper, we present a polynomial-time algorithm to test
whether a given graph is \emph{maximal outer-fan-planar}. The
algorithm can also be employed to produce an outer-fan-planar
embedding, if one exists. On the negative side, we show that testing
fan-planarity of a graph is NP-hard, for the case where the
\emph{rotation system} (i.e., the cyclic order of the edges around
each vertex) is given.
\end{abstract}

\section{Introduction}
\label{sec:introduction}

A \emph{simple drawing} of a graph is a representation of a graph in
the plane, where each vertex is represented by a point and each edge
is a Jordan curve connecting its end-points such that no edge
contains a vertex in its interior, no two edges incident to a common
end-vertex cross, no edge crosses itself, no two edges meet
tangentially, and no two edges cross more than once.

An important subclass of drawn graphs is the class of planar
graphs, in which there exist no crossings between edges. Although
planarity is one of the most desirable properties when drawing a
graph, many real-world graphs are in fact non-planar.

On the other hand, it is widely accepted that edge crossings
have negative impact on the human understanding of a graph
drawing~\cite{DBLP:journals/iwc/Purchase00} and simultaneously it is NP-complete in
general to find drawings with minimum number of edge
crossings~\cite{gj-cigtnpc-79}. This motivated the
study of ``almost planar'' graphs which may contain crossings as long as they do not violate
some prescribed forbidden crossing patterns. Typical examples of
such graphs include $k$-planar graphs~\cite{MR0187232}, $k$-quasi
planar graphs~\cite{DBLP:journals/combinatorica/AgarwalAPPS97}, RAC
graphs~\cite{DBLP:journals/tcs/DidimoEL11} and fan-crossing free
graphs~\cite{DBLP:conf/isaac/CheongHKK13}.

Fan-planar graphs were recently introduced in the same
context~\cite{DBLP:journals/corr/KaufmannU14}. Typically, a
\emph{fan-planar drawing} of graph $G=(V,E)$ is a simple drawing which
allows for more than one crossing on an edge $e \in E$ if and only if
the edges that cross $e$ are incident to a common vertex on the same
side of $e$. Such a crossing is called \emph{fan-crossing} (see
Fig.\ref{fig:fan-crossing}). An equivalent definition can be stated by
means of forbidden crossing patterns; see Fig.~\ref{fig:fanplanar1},
\ref{fig:fanplanar2} and \ref{fig:trianglecrossing}. A graph is
\emph{fan-planar} if it admits a fan-planar drawing. Note that the
class of fan-planar graphs is in a sense the complement of the class
of fan-crossing free graphs~\cite{DBLP:conf/isaac/CheongHKK13}, which
simply forbid fan-crossings.

\begin{figure}[t]
  \centering
  \begin{minipage}[b]{.21\textwidth}
    \centering
    \subfloat[\label{fig:fan-crossing}{Fan-crossing}]
    {\includegraphics[width=\textwidth,page=1]{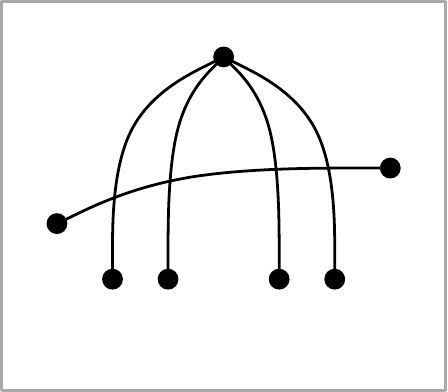}}
  \end{minipage}
  \begin{minipage}[b]{.21\textwidth}
    \centering
    \subfloat[\label{fig:fanplanar1}{Forbidden pattern I}]
    {\includegraphics[width=\textwidth,page=2]{fan-planar}}
  \end{minipage}
  \begin{minipage}[b]{.21\textwidth}
    \centering
    \subfloat[\label{fig:fanplanar2}{Forbidden pattern II}]
    {\includegraphics[width=\textwidth,page=3]{fan-planar}}
  \end{minipage}
  \begin{minipage}[b]{.21\textwidth}
    \centering
    \subfloat[\label{fig:trianglecrossing}{Triangle crossing}]
    {\includegraphics[width=\textwidth,page=4]{fan-planar}}
  \end{minipage}
  \caption{ (taken from \cite{DBLP:journals/corr/KaufmannU14})(a)~Illustration of a fan-crossing.
  (b)~Forbidden crossing pattern I: An edge cannot be crossed by two independent edges.
  (c)~Forbidden crossing pattern II: An edge cannot be crossed by two edges having their common end-point on different sides of it.
  (d)~Forbidden crossing pattern II implies that an edge cannot be crossed by three edges forming a triangle.}
  \label{fig:fanplanar}
\end{figure}

To the best of our knowledge the only known result for this particular
class of graphs is due to Kaufmann and
Ueckerdt~\cite{DBLP:journals/corr/KaufmannU14}, who showed that a
fan-planar graph on $n$ vertices cannot have more than $5n-10$ edges
and this bound is tight. An \emph{outer-fan-planar drawing} is a
fan-planar drawing in which all vertices are on the outer face. A
graph is \emph{outer-fan-planar} if it admits an outer-fan-planar
drawing. An outer-fan-planar graph is maximal outer-fan-planar if
adding any edge to it yields a graph that is not
outer-fan-planar. Note that the forbidden pattern II is irrelevant for
outer-fan-planarity. Our main contribution is a polynomial time
algorithm for the recognition of maximal outer-fan-planar graphs
and significant insights in their structural properties (see
Section~\ref{sec:outerfanplanar}).
We also prove that the general
fan-planar problem is NP-hard, for the case where the \emph{rotation
system} (i.e., the circular order of the edges around each vertex)
is given (see Section~\ref{sec:NPhard}). The question how to test
(non-maximal) outer-fan-planarity efficiently is left open.

\subsection{Related Work}
\label{sec:related_work}

As already stated, $k$-planar graphs~\cite{MR0187232}, $k$-quasi
planar graphs~\cite{DBLP:journals/combinatorica/AgarwalAPPS97}, RAC
graphs~\cite{DBLP:journals/tcs/DidimoEL11} and fan-crossing free
graphs~\cite{DBLP:conf/isaac/CheongHKK13} are closely related to the
class of graphs we study. A graph is \emph{$k$-planar}, if it can be
embedded in the plane with at most $k$ crossings per edge.
Obviously, $1$-planar graphs are also fan-planar. A $1$-planar graph
with $n$ vertices has at most $4n-8$ edges and this bound is
tight~\cite{MANA:MANA3211170125,DBLP:journals/dm/FabriciM07,DBLP:journals/combinatorica/PachT97}.
Grigoriev and
Bodlaender~\cite{DBLP:journals/algorithmica/GrigorievB07}, and,
independently Kohrzik and Mohar~\cite{DBLP:journals/jgt/KorzhikM13}
proved that the problem of determining whether a graph is 1-planar
is NP-hard and remains NP-hard, even if the deletion of an edge
makes the input graph
planar~\cite{DBLP:journals/corr/abs-1203-5944}.

On the positive side, Eades et
al.~\cite{DBLP:journals/tcs/EadesHKLSS13} presented a linear time
algorithm for testing {\em maximal} 1-planarity of graphs with a
given rotation system. Testing outer-$1$-planarity of a graph can be
solved in linear time, as shown independently by  Auer et
al.~\cite{DBLP:conf/gd/AuerBBGHNR13} and Hong et
al.~\cite{DBLP:conf/gd/HongEKLSS13}.  It is worth to note that an
outer-$1$-planar graph is always
planar~\cite{DBLP:conf/gd/AuerBBGHNR13}, while this is not true in
general for outer-fan-planar graphs. Indeed, the complete graph
$K_5$ is outer-fan-planar, but not planar.

The well-known Fary's theorem~\cite{Fary} proved that every plane
graph admits a straight-line drawing. However,
Thomassen~\cite{DBLP:journals/jgt/Thomassen88a} presented two
forbidden subgraphs for straight-line drawings of 1-plane graphs.
Hong et al.~\cite{DBLP:conf/cocoon/HongELP12} gave a linear-time
testing and drawing algorithm to construct a straight-line 1-planar
drawing, if it exists. Recently, Nagamochi solved the more general
problem of straight-line drawability for wider classes of embedded
graphs~\cite{NagamochiTR}. On the other hand, Eggleton showed that
every outer-1-planar graph admits an outer-1-planar straight-line
drawing~\cite{eggleton}.

A drawn graph is called \emph{$k$-quasi planar} if it does not
contain $k$ mutually crossing edges. Fan-planar graphs are $3$-quasi
planar, since they cannot contain three independent edges that
mutually cross. It is conjectured that the number of edges of a
$k$-quasi planar graph is linear in the number of its vertices. Pach
et al.~\cite{DBLP:conf/jcdcg/PachRT02} and
Ackerman~\cite{DBLP:journals/dcg/Ackerman09} showed that this
conjecture holds for $3$- and $4$-quasi planar graphs, respectively.
Fox and Pach~\cite{DBLP:journals/siamdm/FoxPS13} showed that every
$k$-quasi-planar graph with $n$ vertices has at most
$O(n\log^{1+o(1)} n)$ edges.

A different forbidden crossing pattern arises in RAC
drawings where two edges are allowed to cross, as long as the crossings
edges form right angles. Graphs that admit such drawings (with
straight-line edges) are called \emph{right-angle crossing graphs}
or \emph{RAC graphs}, for short. Didimo et
al.~\cite{DBLP:journals/tcs/DidimoEL11} showed that a RAC graph with
$n$ vertices cannot have more than $4n - 10$ edges and that this
bound is tight. It is also known that a RAC graph is quasi
planar~\cite{DBLP:journals/tcs/DidimoEL11}, while a maximally dense
RAC graph (i.e., a RAC graph with $n$ vertices and exactly $4n - 10$
edges) is $1$-planar~\cite{DBLP:journals/dam/EadesL13}. Testing
whether a given graph is a RAC graph is
NP-hard~\cite{DBLP:journals/jgaa/ArgyriouBS12}. Dekhordi and
Eades~\cite{DBLP:journals/ijcga/DehkordiE12} proved that every
outer-1-plane graph has a straight-line RAC drawing, at the cost of
exponential area.

\subsection{Preliminaries}
\label{sec:preliminaries}
We consider finite, undirected and simple graphs. A \emph{drawing}
of a graph maps vertices to points in the plane and edges to simple
closed curves between the points corresponding to the end-vertices
of the edge. For a given drawing, we say that two edges \emph{cross}
if the interiors of their corresponding curves share a common point.
A drawing is \emph{simple} if no edge contains a vertex in its
interior, no two edges incident to a common end-vertex cross, no
edge crosses itself, no two edges meet tangentially, and no two
edges cross more than once. The \emph{rotation system} of a drawing
is the counterclockwise order of the incident edges around each
vertex. The \emph{embedding} of a drawn graph consists of its
rotation system and for each edge the sequence of edges crossing it.
For a graph $G$ and a vertex $v \in V[G]$, we denote by $G-\{v\}$
the graph that results from $G$ by removing $v$. A \emph{fan-planar}
drawing of a graph $G=(V,E)$ is a simple drawing such that, for each
edge $e \in E$, the edges that cross $e$, if any, are all incident
to a common vertex $v$ on the same side of $e$. Such a crossing is
called a \emph{fan-crossing}. An equivalent definition can be stated
by means of forbidden crossing patterns; see
Fig.~\ref{fig:fanplanar1}, \ref{fig:fanplanar2} and
\ref{fig:trianglecrossing}. A graph is \emph{fan-planar} if it
admits a fan-planar drawing. An drawing is \emph{outer-fan-planar}
if it is a fan-planar drawing with all vertices on the outer face. A
graph is \emph{outer-fan-planar} if it admits an outer-fan-planar
drawing. An outer-fan-planar graph is \emph{maximal
outer-fan-planar} if adding any edge to it yields a graph that is
not outer-fan-planar. Note that forbidden pattern II is irrelevant
for outer-fan-planarity.

We now briefly recall the SPQR-tree data
structure~\cite{gutwenger/mutzel:gd2000}. Two vertices $v$ and $w$
are a \emph{separation pair} of a connected graph $G$ if the graph
that results from $G$ by deleting $v$ and $w$ is not connected.  A
graph is \emph{3-connected} if it contains more than three vertices
but no separation pair.  An \emph{SPQR-tree} is a labeled tree that
represents the decomposition of a biconnected graph into 3-connected
components.  Each node $x$ of an SPQR-tree is labeled with a
multi-graph $G_x$ -- called the skeleton of $x$. There are four
different types of labels with the following skeletons:
\begin{inparaenum}[(i)]
\item \emph{$S$-nodes:} a simple cycle.
\item \emph{$P$-nodes:} three or more parallel edges.
\item \emph{$R$-nodes:} a simple 3-connected graph.
\item \emph{$Q$-nodes:} a single edge.
\end{inparaenum}
No two $S$-nodes, nor two $P$-nodes are adjacent in an SPQR-tree.
For each node $x$ of an SPQR-tree there is a one-to-one
correspondence of the edges of the skeleton of $G_x$ and the edges
incident to $x$. Further, let $\{x,y\}$ be an edge of an SPQR-tree
and let $e_x$ and $e_y$ be the edges of $G_x$ and $G_y$,
respectively, that are assigned to $\{x,y\}$. Then, $e_x$ and $e_y$
have the same end-vertices -- say $u$ and $v$. Moreover, let $x'$
and $y'$ be two nodes in different connected components of $T$
without the edge $\{x,y\}$. Then $G_{x'}$ and $G_{y'}$ share at most
$u$ and $v$ as common vertices.

An SPQR-tree represents the (multi-)graph constructed by iteratively
merging edges of the SPQR-tree as follows. For an edge $\{x,y\}$ of
the current tree, let $G_x$ and $G_y$ be the graphs currently
associated with $x$ and $y$, respectively. Remove the edge associated
with $\{x,y\}$ from both $G_x$ and $G_y$ -- except if they are
$Q$-nodes. Let the graph associated with the node that results from
merging $x$ and $y$ be the union of (the remaining parts of) $G_x$
and $G_y$.

The edges of a skeleton are called \emph{virtual edges} if they
correspond to a tree edge that is not incident to a $Q$-node and
\emph{real edges} otherwise. Note that real edges correspond to the
edges of the graph represented by the SPQR-tree.
Every biconnected graph has a unique SPQR-tree and the SPQR-tree
of a biconnected graph can be constructed in linear
time~\cite{gutwenger/mutzel:gd2000}.

\section{Recognizing and Drawing Maximal Outer-Fan-Planar Graphs}
\label{sec:outerfanplanar}

In this section, we prove that given a graph $G=(V,E)$ on $n$
vertices, there is a polynomial time algorithm to decide whether $G$
is maximal outer-fan-planar and if so a corresponding straight-line
drawing can be computed in linear time. We first observe that
biconnectivity is a necessary condition for maximal
outer-fan-planarity. In other words, a simply connected graph that
is not biconnected cannot be maximal outer-fan-planar. Indeed, if an
outer-fan-planar drawing has a cut-vertex $c$, it is easy to see
that it is always possible to draw an edge connecting two neighbors
of $c$ while preserving the outer-fan-planarity. Further, because of
the following lemma, we only have to check whether $G$ admits a
straight-line fan-planar drawing on a circle~$\mathcal{C}$; such a
drawing is completely determined by the cyclic ordering of the
vertices on $\mathcal{C}$.

\begin{lemma}
A biconnected graph $G$ is outer-fan-planar if and only if it admits a
straight-line outer-fan-planar drawing in which the vertices of $G$ are
restricted on a circle $\mathcal{C}$.
\label{lemma:outer_straight}
\end{lemma}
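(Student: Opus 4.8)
The plan is to prove the two directions separately, with essentially all of the work concentrated in the ``only if'' direction. The ``if'' direction is immediate: a straight-line drawing whose vertices lie on a circle $\mathcal{C}$ is, in particular, an outer-fan-planar drawing, since all vertices lie on the boundary of the disk bounded by $\mathcal{C}$ and hence on the outer face, and the chords realize a simple drawing.

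For the ``only if'' direction, I would start from an arbitrary outer-fan-planar drawing $\Gamma$ of $G$ and read off the cyclic order $\sigma$ in which the vertices of $G$ appear along the boundary of the outer face. I would then place the vertices on $\mathcal{C}$ in the order $\sigma$, draw every edge as a straight-line chord to obtain a drawing $\Gamma'$, and argue that $\Gamma'$ is again outer-fan-planar. The crucial step is the observation that, for \emph{any} drawing in which all vertices lie on the outer face, the set of crossing pairs of edges is determined combinatorially by $\sigma$ alone. Since all vertices lie on the outer boundary, I may assume that every edge is routed inside the disk bounded by that boundary, and since the drawing is simple, any two edges cross at most once. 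A Jordan-curve/parity argument then shows that two edges cross (exactly once) if and only if their four endpoints alternate along $\sigma$; two chords of $\mathcal{C}$ cross under precisely the same condition. Consequently $\Gamma$ and $\Gamma'$ have exactly the same set of crossing pairs, and for every edge $e$ the set of edges crossing $e$ is identical in the two drawings.

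It then remains to transfer the fan-crossing condition. If an edge $e$ is crossed in $\Gamma$, fan-planarity guarantees that all edges crossing $e$ are incident to a common vertex $v$; the same edges cross $e$ in $\Gamma'$ and are incident to the same vertex $v$, so the common-vertex requirement is preserved. The ``same side'' requirement is automatic in the outer setting: writing $e=(a,b)$, any edge crossing $e$ has its two endpoints separated by $a$ and $b$ on $\mathcal{C}$, so all edges crossing $e$ and sharing $v$ necessarily have $v$ on one of the two arcs determined by $a,b$ and their other endpoints on the opposite arc. This is exactly why forbidden pattern~II is irrelevant for outer-fan-planarity, and it forces all these crossings to approach $e$ from the side of $v$. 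Hence every fan-crossing of $\Gamma$ remains a valid fan-crossing in $\Gamma'$, so $\Gamma'$ is outer-fan-planar.

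The main obstacle is the combinatorial crossing characterization at the center of the argument. I expect the careful part to be the Jordan-curve/parity argument establishing the equivalence ``endpoints alternate along $\sigma$ if and only if the edges cross,'' together with the justification that the edges of $\Gamma$ may be assumed to lie inside the disk. Once this equivalence is in place, the preservation of the fan structure reduces to the bookkeeping described above, since both the crossing pairs and the incidences of the crossing edges are invariants of $\sigma$.
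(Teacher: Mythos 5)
Your proposal is correct and follows essentially the same route as the paper: both extract the cyclic vertex order from the outer face, redraw on a circle with straight-line chords, and argue that the set of crossing pairs (hence the fan-crossing structure) is an invariant of that cyclic order. Your version is if anything slightly more explicit, spelling out the alternation/parity criterion and why the ``same side'' condition is automatic in the outer setting, whereas the paper asserts the crossing-invariance in one sentence and adds a short remark that the order of crossings along each edge is also determined.
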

\begin{proof}
Let $G$ be an outer-fan-planar graph and let $\Gamma$ be an
outer-fan-planar drawing of $G$. We will only show that $G$ has a
straight-line outer-fan-planar drawing whose vertices lie on a
circle $\mathcal{C}$ (the other direction is trivial). The order of
the vertices along the outer face of $\Gamma$ completely determines
whether two edges cross, as in a simple drawing no two incident
edges can cross and any two edges can cross at most once. Now,
assume that two edges cross another edge in $\Gamma$. Then, both
edges have to be incident to the same vertex; hence, cannot cross
each other. So, the order of the crossings on an edge is also
determined by the order of the vertices on the outer face.
Therefore, we can construct a drawing $\Gamma_{\mathcal{C}}$ by
placing the vertices of $G$ on a circle $\mathcal{C}$ preserving
their order in the outer face of $\Gamma$ and draw the edges as
straight-line segments. \qed
\end{proof}

Since fan-planar graphs with $n$ vertices have at most $5n-10$
edges~\cite{DBLP:journals/corr/KaufmannU14}, we may assume that the
number of edges is linear in the number of vertices. We first
consider the case that $G$ is 3-connected (see
Section~\ref{subsec:tricon}) and then using SPQR-trees we show how
the problem can be solved for biconnected graphs (see
Section~\ref{sec:bicon}).

\subsection{The 3-Connected Case}
\label{subsec:tricon}

Assume that a straight-line drawing of a 3-connected graph $G$ with
$n$ vertices on a circle $\mathcal{C}$ is given. Let $v_1,\dots,v_n$
be the order of the vertices around $\mathcal{C}$. An edge
$\{v_i,v_j\}$ is an \emph{outer edge}, if $i-j\equiv \pm 1 \pmod n$,
a \emph{2-hop}, if $i-j\equiv \pm 2 \pmod n$, and a \emph{long edge}
otherwise. $G$ is a \emph{complete 2-hop graph}, if there are all
outer edges and all 2-hops, but no long edges. Two crossing long
edges are a \emph{scissor} if their end-points form two consecutive
pairs of vertices on $\mathcal{C}$. We say that a triangle is an
\emph{outer triangle} if two of its three edges are outer edges. We
call an outer-fan-planar drawing \emph{maximal}, if adding any edge
to it yields a drawing that is not outer-fan-planar.

Our algorithm is based on the observation that if a graph is
3-connected maximal outer-fan-planar, then it is a complete 2-hop
graph, or we can repeatedly remove any degree-3 vertex from any
4-clique until only a triangle is left. In a second step, we
reinsert the vertices maintaining outer-fan-planarity (if possible).
It turns out that we have to check a constant number of possible
embeddings. In the following, we prove some necessary properties.
The first three lemmas are used in the proof of
Lemma~\ref{LEMMA:maxk4}. Their proofs are based on the
3-connectivity of the input graph; see also
Fig.~\ref{fig:fan-planar-properties-1},
\ref{fig:fan-planar-properties-2} and
\ref{fig:fan-planar-properties-3}.

\begin{lemma}
Let $G$ be a 3-connected outer-fan-planar graph embedded on a circle
$\mathcal{C}$. If two long edges cross, then two of its end-points
are consecutive on $\mathcal{C}$. \label{LEMMA:cross_consec}
\end{lemma}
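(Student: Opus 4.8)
The plan is to argue by contradiction. Suppose two long edges $e_1=\{a,b\}$ and $e_2=\{c,d\}$ cross but no two of their four (necessarily distinct, since incident edges do not cross in a simple drawing) endpoints are consecutive on $\mathcal{C}$. As $e_1,e_2$ are straight chords that cross, their endpoints alternate around $\mathcal{C}$, so I may assume the cyclic order is $a,c,b,d$. These four points split $\mathcal{C}$ into four open arcs, and I write $P_1,P_2,P_3,P_4$ for the sets of vertices lying strictly inside the arcs $(a,c)$, $(c,b)$, $(b,d)$, $(d,a)$, respectively. The contradiction hypothesis says precisely that $P_1,P_2,P_3,P_4$ are all nonempty. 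Now I apply fan-planarity: all edges crossing $e_1$ share a common endpoint $v_1$, and since $e_2$ crosses $e_1$ this forces $v_1\in\{c,d\}$; symmetrically all edges crossing $e_2$ share an endpoint $v_2\in\{a,b\}$.

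The key step is a purely positional bookkeeping. For a vertex $p$ in a pocket $P_i$, whether a chord $pq$ crosses $e_1$, crosses $e_2$, crosses both, or crosses neither is determined solely by the arc containing $q$; for example, for $p\in P_1$ a chord to $P_2$ crosses only $e_2$, a chord to $P_4$ only $e_1$, a chord to $P_3$ both, and a chord to $P_1\cup\{a,c\}$ neither. Since every vertex strictly inside an arc differs from $a,b,c,d$, any edge out of $p$ that crosses $e_1$ but is incident to neither $c$ nor $d$, or crosses $e_2$ but is incident to neither $a$ nor $b$, is ruled out by fan-planarity. Carrying this out for every pocket shows that $P_i$ can reach the rest of the graph only through the two arc-endpoints bounding it, together with $v_1$ or $v_2$ in the single direction where the common vertex happens to be the far endpoint of a crossing; concretely $N(P_1)\setminus P_1\subseteq\{a,c\}\cup(\{b\}\text{ if }v_2=b)\cup(\{d\}\text{ if }v_1=d)$, and analogously for $P_2,P_3,P_4$.

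Finally I run the four cases $(v_1,v_2)\in\{c,d\}\times\{a,b\}$. In each case exactly one pocket loses both optional neighbours, so its outside-neighbourhood shrinks to a single pair from $\{a,b,c,d\}$: for $(c,a)$ it is $P_1$ behind $\{a,c\}$, for $(d,b)$ it is $P_3$ behind $\{b,d\}$, for $(c,b)$ it is $P_2$ behind $\{b,c\}$, and for $(d,a)$ it is $P_4$ behind $\{a,d\}$. Since that pocket is nonempty and the other two of $a,b,c,d$ lie outside it, the pair is a separation pair, contradicting the $3$-connectivity of $G$; hence some arc is empty, i.e.\ two endpoints are consecutive. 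I expect the main obstacle to be the crossing bookkeeping of the second paragraph together with checking, in each case, that the claimed cut genuinely separates two nonempty parts. The conceptual crux is that fan-planarity funnels every edge crossing $e_1$ or $e_2$ through one of the four endpoints, so a trapped pocket can communicate with the rest of the graph only across two of them.
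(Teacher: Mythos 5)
Your proof is correct and takes essentially the same route as the paper's: fan-planarity funnels every chord leaving one of the four arcs through an endpoint of $e_1$ or $e_2$, so a nonempty arc ends up trapped behind a separation pair, contradicting $3$-connectivity. The only difference is organizational --- the paper fixes one configuration up to symmetry and invokes $3$-connectivity on two successive pockets to force the edges that determine your $(v_1,v_2)$, whereas you enumerate the four $(v_1,v_2)$ cases directly; both versions of the bookkeeping check out.
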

\begin{proof}
Assume to the contrary that there exist two long crossing edges
$\{v_{i_1},v_{i_3}\}$ and $\{v_{i_2},v_{i_4}\}$, such that $2 \leq
i_1 \leq i_2 - 2 \leq i_3 - 4 \leq i_4 - 6$ and $i_4 = n$; see
Fig.\ref{fig:fan-planar-properties-1}. Since $G$ is $3$-connected,
there has to be a vertex $v_{j_1}$ with $i_1 < j_1 < i_2$, such that
$v_{j_1}$ is adjacent to a vertex not in
$\{v_{i_1},\dots,v_{i_2}\}$. By outer-fan-planarity, this can only
be $v_{i_3}$ or $v_{i_4}$; say without loss of generality $v_{i_3}$.
Likewise there is a vertex $v_{i_2}$ with $i_2 < j_2 < i_3$, such
that $v_{j_2}$ is adjacent to a vertex not in
$\{v_{i_2},\dots,v_{i_3}\}$. By outer-fan-planarity this can now
only be $v_{i_4}$.
But now outer-fan-planarity does not permit
to add an edge connecting the two parts separated by $v_{i_3}$ and
$v_{i_4}$. \qed
\end{proof}

\begin{figure}[t]
  \centering
  \begin{minipage}[b]{.21\textwidth}
    \centering
    \subfloat[\label{fig:fan-planar-properties-1}{}]
    {\includegraphics[width=\textwidth,page=1]{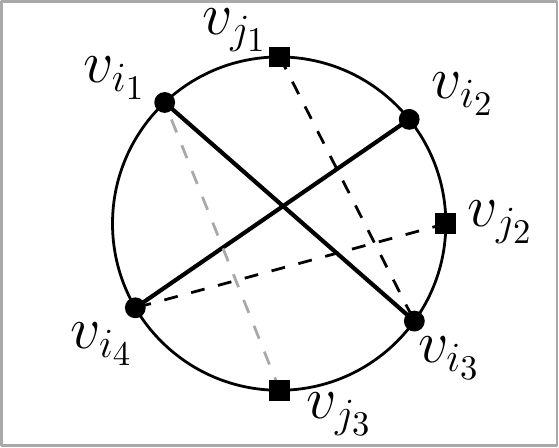}}
  \end{minipage}
  \begin{minipage}[b]{.21\textwidth}
    \centering
    \subfloat[\label{fig:fan-planar-properties-2}{}]
    {\includegraphics[width=\textwidth,page=2]{fan-planar-properties}}
  \end{minipage}
  \begin{minipage}[b]{.21\textwidth}
    \centering
    \subfloat[\label{fig:fan-planar-properties-3}{}]
    {\includegraphics[width=\textwidth,page=3]{fan-planar-properties}}
  \end{minipage}
  \begin{minipage}[b]{.21\textwidth}
    \centering
    \subfloat[\label{fig:fan-planar-properties-4}{}]
    {\includegraphics[width=\textwidth,page=4]{fan-planar-properties}}
  \end{minipage}
  \caption{Different configurations used in:
  (a)~Lemma~\ref{LEMMA:cross_consec},
  (b)~Lemma~\ref{LEMMA:scissor},
  (c, d)~Lemma~\ref{LEMMA:scissork4}.}
  \label{fig:fan-planar-properties-1}
\end{figure}

\begin{lemma}
Let $G$ be a 3-connected outer-fan-planar graph embedded on a circle
$\mathcal{C}$. If there are two long crossing edges, then there is a
scissor, as well. \label{LEMMA:scissor}
\end{lemma}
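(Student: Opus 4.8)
The plan is to argue by an extremal (minimal-counterexample) argument that builds directly on Lemma~\ref{LEMMA:cross_consec}. Given two crossing long edges, I label their endpoints $v_{i_1},v_{i_2},v_{i_3},v_{i_4}$ in the cyclic order around $\mathcal{C}$, so that the two edges are $\{v_{i_1},v_{i_3}\}$ and $\{v_{i_2},v_{i_4}\}$, and I write $A_{12},A_{23},A_{34},A_{41}$ for the four open arcs bounded by consecutive endpoints. A scissor is exactly a crossing long-edge pair in which the opposite arcs $A_{12}$ and $A_{34}$ are both empty. By Lemma~\ref{LEMMA:cross_consec} at least one of the four arcs is empty, and since relabelling the four endpoints cyclically merely renames the two edges while rotating the arc labels (and reflection swaps $A_{23}$ with $A_{41}$), I may assume without loss of generality that $A_{12}$ is empty. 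Among all crossing pairs of long edges admitting such a labelling with $A_{12}=\emptyset$, I then pick one minimising $|A_{34}|$, and I aim to show $|A_{34}|=0$, which yields the scissor.

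Assume for contradiction that $|A_{34}|\ge 1$ and take a vertex $v_j$ with $i_3<j<i_4$. The heart of the argument is to determine where $v_j$ may send an edge leaving the arc $[v_{i_3},v_{i_4}]$. A short case analysis of the possible positions of the other endpoint $v_k$ shows that if $v_k$ lies in $A_{23}$ then $\{v_j,v_k\}$ crosses $\{v_{i_1},v_{i_3}\}$, while if $v_k$ lies in $A_{41}$ then $\{v_j,v_k\}$ crosses $\{v_{i_2},v_{i_4}\}$; in either case the crossed long edge is already crossed by the other long edge, and the two crossing edges are independent, which is the forbidden pattern of Fig.~\ref{fig:fanplanar1}. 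Hence the only admissible external neighbours of a vertex of $A_{34}$ are $v_{i_1}$ and $v_{i_2}$. Since $G$ is $3$-connected and $A_{34}\neq\emptyset$, the pair $\{v_{i_3},v_{i_4}\}$ cannot be a separation pair, so some $v_j\in A_{34}$ is adjacent to $v_{i_1}$ or to $v_{i_2}$; by the reflective symmetry swapping the two edges I may assume it is adjacent to $v_{i_1}$.

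Now $\{v_{i_1},v_j\}$ crosses $\{v_{i_2},v_{i_4}\}$, and this new crossing pair has endpoints in cyclic order $v_{i_1},v_{i_2},v_j,v_{i_4}$, so its two relevant arcs are the old (empty) $A_{12}$ and the arc $(v_j,v_{i_4})\subsetneq A_{34}$, which is strictly smaller than $A_{34}$. Provided $\{v_{i_1},v_j\}$ is again a long edge, this contradicts the minimality of $|A_{34}|$ and finishes the proof. The main obstacle is precisely this last point: a priori $\{v_{i_1},v_j\}$ could be a $2$-hop. I plan to rule this out by observing that its forward distance satisfies $j-i_1\ge 3$, so it can fail to be long only if its backward distance equals $2$; but a backward distance of $2$ forces $v_{i_4}$ to be the unique vertex between $v_j$ and $v_{i_1}$, whence $A_{41}=\emptyset$ and $\{v_{i_2},v_{i_4}\}$ itself is a $2$-hop, contradicting that it is a long edge. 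Thus $\{v_{i_1},v_j\}$ is long, minimality is violated, and we conclude $|A_{34}|=0$, i.e.\ the chosen pair is a scissor. The delicate bookkeeping lies in verifying the crossing relations of the case analysis and in handling this degenerate $2$-hop case.
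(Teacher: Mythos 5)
Your proof is correct and follows essentially the same route as the paper: invoke Lemma~\ref{LEMMA:cross_consec} to fix one empty arc, then run an extremal argument in which 3-connectivity of the opposite pair of endpoints forces an edge out of the opposite arc, outer-fan-planarity (forbidden pattern~I) forces that edge to land on one of the two remaining endpoints, and minimality of the opposite arc yields a contradiction unless it is empty. The only substantive difference is cosmetic: the paper restricts its minimisation to long edges with endpoints between $v_\ell$ and $v_k$, which makes the new edge automatically long, whereas you minimise over all suitably labelled crossing pairs and therefore need (and correctly supply) the explicit check ruling out the degenerate case where $\{v_{i_1},v_j\}$ is a 2-hop.
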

\begin{proof}
Let $e$ and $e'$ be two long crossing edges. By
Lemma~\ref{LEMMA:cross_consec}, it follows that two of the
end-points of $e$ and $e'$ are consecutive on $\mathcal{C}$. So,
assume without loss of generality that the vertices on $\mathcal{C}$
are labeled such that $e=\{v_1,v_k\}$ and $e'=\{v_\ell,v_n\}$ for
some $\ell < k$; see Fig.\ref{fig:fan-planar-properties-2}. If $k =
\ell + 1$, then the lemma holds. If this is not the case, then among
all crossing long edges with end-vertices $v_1$ and $v_n$ on one
hand and end-vertices between $v_\ell$ and $v_k$ on the other hand,
let edges $\{v_1,v_{j}\}$ and $\{v_n,v_i\}$, with $\ell \leq i < j
\leq k$ be the ones for which the difference $j-i$ is minimal.
Obviously, if $j=i+1$, then the edges $\{v_1,v_{j}\}$ and
$\{v_n,v_i\}$ are a scissor. Assume now that $j>i+1$. Since $v_i$
and $v_j$ cannot be a separation pair, there has to be an edge
between a vertex $v_{s}$ with $j < s < i$ and a vertex $v_t$ with $t
< i$ or $t > j$. By outer-fan-planarity $t = 1$ or $t = n$. This
contradicts the choice of $\{v_1,v_j\}$ and $\{v_i,v_n\}$. \qed
\end{proof}

\begin{lemma}
Let $G$ be a 3-connected graph  embedded on a circle $\mathcal{C}$
with a maximal outer-fan-planar drawing. If $G$ contains a scissor,
then its end-vertices induce a $K_4$. \label{LEMMA:scissork4}
\end{lemma}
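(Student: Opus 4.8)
The plan is to read the scissor as a quadrilateral. Let the four end-vertices of the scissor, in the cyclic order along $\mathcal{C}$, be $a,b,c,d$, with $\{a,b\}$ and $\{c,d\}$ the two consecutive pairs; then the two crossing long edges of the scissor are exactly the diagonals $ac$ and $bd$. To show that $a,b,c,d$ induce a $K_4$ I must exhibit the remaining four edges: the two sides $ab,cd$ and the two sides $bc,da$. The diagonals $ac,bd$ are present by hypothesis, so the whole task reduces to producing these four sides.

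The two sides $ab$ and $cd$ come for free. Since $\{a,b\}$ and $\{c,d\}$ are pairs of vertices consecutive on $\mathcal{C}$, each of $ab,cd$ is an outer edge, and the open circular segment it cuts off contains no vertex; hence no chord can cross it. So if $ab$ (or $cd$) were absent, drawing it would introduce no crossing at all and preserve outer-fan-planarity, contradicting maximality. Thus both outer sides are already present.

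For the two remaining sides I would first extract, from fan-planarity of the given drawing, where the crossings on the diagonals can live. Because $bd$ crosses $ac$, the fan condition on $ac$ forces \emph{every} chord crossing $ac$ to emanate from a single apex in $\{b,d\}$; symmetrically, every chord crossing $bd$ emanates from an apex in $\{a,c\}$. In particular, a vertex lying on the inner arc strictly between $b$ and $c$ (respectively between $d$ and $a$) can reach the opposite arc only along an edge incident to the relevant apex. Coupling this with $3$-connectivity, via the same separation-pair argument used in Lemma~\ref{LEMMA:scissor}, prevents an inner arc from being fenced off from the rest of the graph and thereby controls precisely which chords can meet the place where a missing side $bc$ (respectively $da$) would be drawn. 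The analysis splits according to the two possible apices of each diagonal, giving the two configurations depicted in Fig.~\ref{fig:fan-planar-properties-3} and Fig.~\ref{fig:fan-planar-properties-4}.

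In every case the target is the same: to show that inserting the missing side $bc$ meets only a single fan of edges, so that outer-fan-planarity survives and maximality is contradicted; the symmetric argument then inserts $da$. The apex analysis forces the chords met by $bc$ to share a common end-vertex, so $bc$ itself is crossed only by a fan. The hard part will be the complementary bookkeeping: verifying that the new side does not spoil the fan condition of an \emph{already drawn} edge that it crosses. Concretely, each edge crossed by $bc$ is in the original drawing crossed only by a fan, and I must guarantee that this fan's apex is an end-vertex of $bc$ so that $bc$ merely joins it. Reconciling the two independent apex choices of the diagonals—ruling out the one configuration in which the pre-existing fan would be centred at the ``wrong'' end of the crossed edge—is exactly why the proof must separate into the two cases of Fig.~\ref{fig:fan-planar-properties-3} and~\ref{fig:fan-planar-properties-4}, and it is where $3$-connectivity is indispensable.
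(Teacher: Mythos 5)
Your overall strategy is exactly the paper's: show that each missing side of the scissor quadrilateral could be inserted without destroying outer-fan-planarity, so that maximality forces it to be present already. The reduction to the two long sides $bc$ and $da$, and the observation that the two outer sides come for free from maximality, are fine. The gap is that the decisive deductions are only announced, never carried out: you write that ``the hard part will be'' verifying that every edge crossed by the inserted side already has its fan centred at an endpoint of that side, and that reconciling the apex choices of the two diagonals is ``where $3$-connectivity is indispensable''~--- and then stop. That reconciliation \emph{is} the proof, and it is short. The paper's chain is: (i) no chord joins the open arc between $b$ and $c$ to the open arc between $d$ and $a$, since such a chord would cross a diagonal together with the other, independent, diagonal; (ii) since $\{b,c\}$ is not a separation pair, some vertex of the $(b,c)$-arc is adjacent to $a$ or $d$~--- say, w.l.o.g., to $a$; (iii) since $\{a,d\}$ is not a separation pair, some vertex of the $(d,a)$-arc is adjacent to $b$ or $c$, and the fan condition on the diagonal $bd$ (already crossed by the edge from (ii)) forces this neighbour to be $b$; (iv) consequently $c$ has no neighbour in the $(d,a)$-arc and $d$ none in the $(b,c)$-arc, so every edge crossing the would-be side $bc$ is incident to $a$, and each such edge is already crossed by two edges incident to $b$ (the diagonal $bd$ and the edge from (iii)), which pins its fan at $b$ and lets $bc$ join that fan.

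Without step (iii) the ``wrong apex'' configuration you worry about is not excluded, and without step (iv) you cannot even conclude that $bc$ meets only a single fan; neither follows from the apex observation on the diagonals alone. Also, the two figures you cite are not two genuinely separate cases of this lemma: after the w.l.o.g.\ choice in (ii), the fan condition determines everything, and the second side $da$ follows by the symmetric argument. So the proposal is a correct plan along the paper's own lines, but the lemma's actual content~--- the chain (i)--(iv)~--- is missing.
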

\begin{proof}
Assume without loss of generality that the vertices on $\mathcal{C}$
are labeled such that $\{v_1,v_{i+1}\}$ and $\{v_i,v_n\}$ is a
scissor, for some $1 < i < n$. We have to show that $\{v_1,v_i\} \in
E[G]$ and $\{v_n,v_{i+1}\} \in E[G]$. By outer-fan-planarity there
cannot be an edge $\{v_\ell,v_k\}$, such that $1 < \ell < i$ and
${i+1} < k < n$; see Fig.~\ref{fig:fan-planar-properties-3}. Since
$v_1$ and $v_i$ cannot be a separation pair, there has to be an edge
between $v_n$ or $v_{i+1}$ and a vertex $v_\ell$ with $1 < \ell <
i$; say from $v_n$. Similarly, since $v_n$ and $v_{i+1}$ cannot be a
separation pair, there has to be an edge between $v_1$ or $v_i$ and
a vertex $v_k$, with ${i+1} < k < n$. By outer-fan-planarity, this
can only be an edge from $v_1$, as otherwise edge
$\{v_1,v_{i+1}\}$ would be crossed by two independent edges; see
Fig.~\ref{fig:fan-planar-properties-3}. As a consequence, there cannot be
an edge between $v_i$ and a vertex $v_k$, ${i+1} < k < n$ nor an
edge between $v_{i+1}$ and a vertex $v_\ell$, $1 < \ell < i$. Hence,
the edge $\{v_1,v_i\}$ is only crossed by edges incident to $v_n$.
Moreover, any edge that is crossed by $\{v_i,v_1\}$ is already
crossed by two edges incident to $v_1$. Since $G$ is maximal
outer-fan-planar, it must contain edge $\{v_i,v_1\}$. A similar
argument holds for $\{v_n,v_{i+1}\}$.\qed
\end{proof}

\begin{lemma}
Let $G$ be a 3-connected graph with a maximal outer-fan-planar
drawing and assume that the drawing contains at least one long edge.
Then, $G$ contains a $K_4$ with all four vertices drawn
consecutively on the circle. \label{LEMMA:maxk4}
\end{lemma}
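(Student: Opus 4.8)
The plan is to first produce \emph{some} $K_4$ on the four endpoints of a scissor, and then to ``shrink'' this scissor until its four endpoints become consecutive on $\mathcal{C}$. Throughout I would use two consequences of maximality that hold in every maximal outer-fan-planar straight-line drawing on $\mathcal{C}$: every \emph{outer edge} is present (a chord between two circularly consecutive vertices never creates a crossing, hence its insertion never violates outer-fan-planarity), and a fan crossing a fixed edge may be enlarged as long as the new edge stays incident to the common vertex of the fan.

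First I would show that a single long edge forces two crossing long edges. Let $e$ be a long edge and consider the shorter arc it cuts off, which contains at least two internal vertices. Since $G$ is $3$-connected, these internal vertices cannot be separated from the rest by the two endpoints of $e$, so some edge leaves the arc and hence crosses $e$; by fan-planarity all edges crossing $e$ emanate from a common vertex $v$, forming a fan. If no fan edge were long, then every fan edge would be a $2$-hop, which (its internal endpoint being two apart from $v$) forces $v$ to be a circular neighbour of an endpoint of $e$ and to reach only the outermost internal vertex; maximality then lets me add the chord from $v$ to the next internal vertex, a long edge still incident to $v$ and thus still crossing $e$ within the same fan, a contradiction. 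Hence the fan contains a long edge, and $e$ crosses a long edge. Now Lemma~\ref{LEMMA:scissor} yields a scissor and Lemma~\ref{LEMMA:scissork4} shows that its four endpoints induce a $K_4$.

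Next I would turn this $K_4$ into one on consecutive vertices. Label so that the scissor is $\{v_1,v_{i+1}\}$ and $\{v_i,v_n\}$; since both edges are long we have $i\ge 3$, so the arc strictly between $v_1$ and $v_i$ (the \emph{left gap}) is nonempty. Among all scissors I choose one whose left gap $i-2$ is minimum. By the proof of Lemma~\ref{LEMMA:scissork4} the $K_4$-edge $\{v_1,v_i\}$ is crossed only by edges incident to $v_n$. If $i\ge 4$, then $\{v_1,v_i\}$ is itself long and maximality forces the chord $\{v_{i-1},v_n\}$ to be present: being incident to $v_n$, it joins the fan crossing $\{v_1,v_i\}$ (and the fan crossing $\{v_1,v_{i+1}\}$) without introducing a second apex, so it can always be inserted unless already there. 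But then $\{v_1,v_i\}$ and $\{v_{i-1},v_n\}$ form a scissor of left gap $i-3$, contradicting minimality. Therefore $i=3$. In this case $3$-connectivity forbids $v_2$ from being separated by $\{v_1,v_3\}$, and its only admissible edge across $\{v_1,v_3\}$ leads to $v_n$; hence $\{v_2,v_n\}$ is present. Together with the outer edges $\{v_n,v_1\},\{v_1,v_2\},\{v_2,v_3\}$, the $K_4$-edge $\{v_1,v_3\}$, and the scissor edge $\{v_3,v_n\}$, this yields a $K_4$ on the four consecutive vertices $v_n,v_1,v_2,v_3$.

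The hard part will be the repeated appeals to maximality that insert edges: in both the fan-enlargement step and the insertion of $\{v_{i-1},v_n\}$ I must verify that the new edge, together with the edges already drawn, still satisfies outer-fan-planarity, i.e. that it is neither crossed by two independent edges nor by a fan with a second apex. This relies precisely on the ``clean'' structure established in Lemmas~\ref{LEMMA:cross_consec}--\ref{LEMMA:scissork4}, namely that no chord joins the two interiors separated by the scissor, and on confirming that the left gap decreases strictly at each step so that the minimality argument terminates.
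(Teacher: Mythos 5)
Your opening reduction---``a single long edge forces two crossing long edges''---is the critical flaw: the claim is false, and the paper's own proof must treat ``no two crossing long edges'' as a genuine, non-vacuous case. Concretely, take six vertices in the circular order $a,d,e,f,b,c$ with edges $ab,bc,ca$, vertex $d$ adjacent to $a,b,c$, vertex $e$ adjacent to $a,b,d$, and vertex $f$ adjacent to $b,d,e$. This graph is a $3$-connected planar triangulation, the drawing is outer-fan-planar, it is maximal (each missing chord $\{e,c\}$, $\{f,a\}$, $\{f,c\}$ is either itself crossed by two independent edges or creates such a crossing on $\{b,d\}$), and it contains exactly one long edge, namely $\{d,b\}$. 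Your argument breaks precisely at the appeal to maximality: the chord from the fan apex to ``the next internal vertex'' (here $\{e,c\}$ or $\{f,a\}$) is crossed by two independent edges already present, so maximality does not force its insertion. The same defect recurs in your scissor-shrinking step: the edge $\{v_{i-1},v_n\}$ crosses $\{v_1,v_{i+1}\}$ and also every chord $\{v_i,v_k\}$ with $2\le k\le i-2$ lying inside the smaller scissor component; such a chord is independent of $\{v_1,v_{i+1}\}$, so the insertion can be blocked, and your minimality is taken over scissors only and does not exclude these $2$-hop or short chords.

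The paper sidesteps both problems with a different inner induction. After fixing (via Lemmas~\ref{LEMMA:scissor} and~\ref{LEMMA:scissork4}) the scissor whose smaller component is minimal---hence scissor-free---it selects an \emph{innermost} long edge $\{v_k,v_\ell\}$ inside that component (and, in the case with no two crossing long edges at all, an innermost long edge globally). Innermostness guarantees that $\{v_k,v_\ell\}$ is crossed by no long edge, hence by exactly one $2$-hop, and that if $\ell-k>3$ the nested long edge $\{v_{k+1},v_\ell\}$ really can be added without violating outer-fan-planarity, contradicting maximality; so $\ell-k=3$ and the four spanned vertices are consecutive and induce a $K_4$ by maximality. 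To repair your proof you would need to replace both of your maximality insertions by an argument of this innermost-edge type; as written, the proposal does not establish the lemma.
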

\begin{proof}
First consider the case where the graph contains at least two
crossing long edges and, thus, by Lemma~\ref{LEMMA:scissor} a
scissor. Removing the vertices of a scissor, splits $G$ into two
connected components. Assume that we have chosen the scissor such
that the smaller of the two components is as small as possible
(thus, scissor-free) and that the vertices around $\mathcal{C}$ are
labeled such that this scissor is $\{v_1,v_{i+1}\}$, $\{v_i,v_n\}$
with $i \leq n - i$, i.e., the component induced by
$v_2,\dots,v_{i-1}$ is the smaller one. Recall that by
Lemma~\ref{LEMMA:scissork4} a scissor induces a $K_4$.

If $i=3$, i.e., if $\{v_1,v_3\}$ is a 2-hop, then $G$ should contain
either $\{v_2,v_n\}$ or $\{v_2,v_4\}$, as otherwise $v_1$ and $v_3$
is a separation pair; see Fig.~\ref{fig:fan-planar-properties-4}.
Say without loss of generality $\{v_2,v_n\}$. Then, $v_1$, $v_2$,
$v_3$ together with $v_n$ induce a $K_4$ with all vertices
consecutive on circle $\mathcal{C}$.

If $i>3$, let $\{v_k,v_\ell\}$, $1 \leq k < \ell \leq i$ be a long
edge such that there is no long edge
$\{v_{k'},v_{\ell'}\}\neq\{v_k,v_\ell\}$ with $k \leq k' < \ell'
\leq \ell$; see Fig.~\ref{fig:fan-planar-properties-5}. Then, no
long edge is crossing the edge $\{v_k,v_\ell\}$, as otherwise by
Lemma~\ref{LEMMA:scissor} such a crossing would yield a new scissor,
contradicting the choice of $\{v_1,v_{i+1}\}$ and $\{v_i,v_n\}$.
Since $\{v_k,v_\ell\}$ is not crossed by a long edge, it must be
crossed by exactly one 2-hop, say $\{v_{k-1},v_{k+1}\}$.  Now,
$\ell-k > 3$ is not possible, since we could add the edge
$\{v_{k+1}, v_\ell\}$, which is long. Hence, $\ell-k = 3$ and by
maximality of the outer-fan-planar drawing,
$v_k,v_{k+1},v_{k+2},v_\ell$ induces a $K_4$ with all vertices
consecutive on $\mathcal{C}$. Finally, if $G$ contains no two
crossing long edges, let $\{v_k,v_\ell\}$, $1 \leq k < \ell \leq n$
be a long edge such that there is no long edge
$\{v_{k'},v_{\ell'}\}\neq\{v_k,v_\ell\}$ with $k \leq k' < \ell'
\leq \ell$. By the same argumentation as above, we obtain that
$v_k,v_{k+1},v_{k+2},v_\ell$ induces a $K_4$ with all vertices
consecutive on $\mathcal{C}$. \qed
\end{proof}

\begin{lemma}\label{LEMMA:xyz1}
Let $G$ be a 3-connected outer-fan-planar graph with at least six
vertices. If $G$ contains a $K_4$ with all vertices drawn
consecutively on circle $\mathcal{C}$, then this $K_4$ contains
exactly one vertex of degree three and this vertex is neither the
first nor the last of the four vertices.
\end{lemma}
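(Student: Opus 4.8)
The plan is to exploit the one long edge inside the $K_4$. Label the four consecutive vertices $a,b,c,d$ in their circular order on $\mathcal{C}$, so that $ab,bc,cd$ are outer edges, $ac$ and $bd$ are $2$-hops that cross each other, and---crucially, because $n\ge 6$---the edge $ad$ is a \emph{long} edge whose short-arc side contains exactly $b$ and $c$. Every edge crossing $ad$ must therefore join a vertex of $\{b,c\}$ to a vertex of the long arc $R=\{x_1,\dots,x_m\}$ (with $m=n-4\ge 2$), and a vertex of the $K_4$ has degree exactly $3$ in $G$ iff it has no neighbour in $R$. First I would record a uniqueness statement that needs only $3$-connectivity: if two of $a,b,c,d$ had degree $3$, then---since the $K_4$ is complete---deleting the other two would leave those two joined only to each other, separating them from the nonempty long arc, a contradiction. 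Hence at most one of the four is a degree-$3$ vertex.

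Next I would apply fan-planarity to $ad$. If no edge crosses $ad$, then $b$ and $c$ both miss $R$, giving two degree-$3$ vertices and contradicting the previous paragraph; so $ad$ is crossed. By fan-planarity all crossing edges share a common vertex $v$. If $v\in\{b,c\}$ (or if $ad$ is crossed only once), then whichever of $b,c$ is not $v$ sends no edge across $ad$, hence has no neighbour in $R$ and so has degree $3$. Combined with the uniqueness claim, this already yields exactly one degree-$3$ vertex lying in $\{b,c\}$, i.e.\ neither first nor last.

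The hard part is the remaining possibility that the common vertex $v=u$ lies on the long arc. Then the only crossings of $ad$ are $bu$ and $cu$, so $N(b)=\{a,c,d,u\}$ and $N(c)=\{a,b,d,u\}$ both have degree $4$, and the degree-$3$ vertex, if any, would be $a$ or $d$; I must rule this configuration out entirely. Writing $u=x_j$, I would examine the fan-planarity of $bu$ and $cu$: their crossing sets already contain $ac,ad$ (incident to $a$) and $bd,ad$ (incident to $d$), respectively, so their fans are pinned at $a$ and at $d$. This forces (i) $a$ has no neighbour $x_i$ with $i<j$, since such $ax_i$ would cross $cu$ off its fan; (ii) symmetrically $d$ has no neighbour $x_i$ with $i>j$; and (iii) no long-arc edge runs from $\{x_1,\dots,x_{j-1}\}$ to $\{x_{j+1},\dots,x_m\}$ over $u$, as it would cross $bu$ off its fan.

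These three facts confine the neighbourhoods of the left block $L=\{x_1,\dots,x_{j-1}\}$ and right block $R'=\{x_{j+1},\dots,x_m\}$: using (i), (iii) and $d$'s freedom to reach $L$, every neighbour of $L$ lies in $L\cup\{d,u\}$, and symmetrically every neighbour of $R'$ lies in $R'\cup\{a,u\}$. Thus if $L\ne\emptyset$ the pair $\{d,u\}$ separates $L$ from $\{a,b,c\}$, and if $R'\ne\emptyset$ the pair $\{a,u\}$ is a separation pair; $3$-connectivity forces both blocks empty, i.e.\ $j=1$ and $j=m$, whence $m=1$, contradicting $m\ge 2$. This eliminates the long-arc case, so the degree-$3$ vertex is always one of $b,c$; together with uniqueness this is exactly the claim. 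I expect step~(iii) and the careful bookkeeping of which edges cross $bu$ and $cu$ to be the only delicate point; everything else is separation-pair counting.
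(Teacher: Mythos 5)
Your proof is correct and follows essentially the same route as the paper: both arguments hinge on the fan condition for the long edge $\{v_1,v_4\}$ (your $ad$) forcing all connections from $\{v_2,v_3\}$ to the long arc through a single apex, and then use separation-pair arguments (your $\{d,u\}$ and $\{a,u\}$ play the role of the paper's pair $\{v_4,v_k\}$) to exclude the case where that apex lies on the long arc. If anything, your write-up is more complete than the paper's: the ``delete the other two'' observation showing that at most one of the four vertices has degree three --- and hence that the first and last vertices cannot be the degree-3 vertex --- is only asserted in the paper (``three out of the four vertices have degree at least four''), and your points (i)--(iii) make explicit why the paper's forbidden connecting edge ``would not be possible in an outer-fan-planar drawing.''
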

\begin{proof}
Let the vertices around circle $\mathcal{C}$ be labeled so that
$v_1$, $v_2$, $v_3$, $v_4$ induce a $K_4$. Since $v_1$ and $v_4$ is
not a separation pair, there is an edge between $v_2$ or $v_3$ and a
vertex, say $v_k$, among $v_5, \dots, v_n$. Hence, three out of the
four vertices $v_1$, $v_2$, $v_3$ and $v_4$ have degree at least
four; see Fig.~\ref{fig:fan-planar-properties-6}. If $v_3$ had a
neighbor in $v_5,\dots, v_n$, then this could only be $v_k$, as
otherwise $\{v_1,v_4\}$ would be crossed by two independent edges.
Since $G$ has at least 6 vertices, we assume without loss of
generality that $k > 5$. Since $v_4$ and $v_k$ is not a separation
pair, there has to be an edge $\{v_\ell,v_m\}$ for some $4 < \ell <
k$ and a $j \notin \{4,\dots,k\}$. But such an edge would not be
possible in an outer-fan-planar drawing.\qed
\end{proof}

\begin{figure}[t]
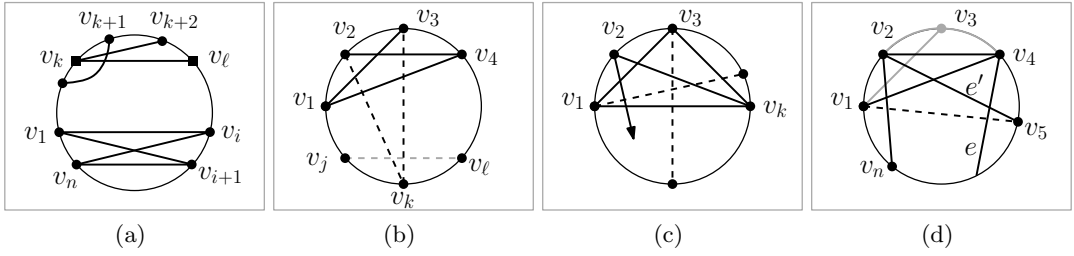

  \centering
  \begin{minipage}[b]{.21\textwidth}
    \centering
    \subfloat[\label{fig:fan-planar-properties-5}{}]
    {\includegraphics[width=\textwidth,page=5]{fan-planar-properties}}
  \end{minipage}
  \begin{minipage}[b]{.21\textwidth}
    \centering
    \subfloat[\label{fig:fan-planar-properties-6}{}]
    {\includegraphics[width=\textwidth,page=6]{fan-planar-properties}}
  \end{minipage}
  \begin{minipage}[b]{.21\textwidth}
    \centering
    \subfloat[\label{fig:fan-planar-properties-7}{}]
    {\includegraphics[width=\textwidth,page=7]{fan-planar-properties}}
  \end{minipage}
  \begin{minipage}[b]{.21\textwidth}
    \centering
    \subfloat[\label{fig:fan-planar-properties-8}{}]
    {\includegraphics[width=\textwidth,page=8]{fan-planar-properties}}
  \end{minipage}
  \caption{Different configurations used in:
  (a)~Lemma~\ref{LEMMA:maxk4},
  (b)~Lemma~\ref{LEMMA:xyz1},
  (c)~Lemma~\ref{LEMMA:xyz2},
  (d)~Lemma~\ref{LEMMA:G-v}.}
  \label{fig:fan-planar-properties-2}
\end{figure}

\begin{lemma}\label{LEMMA:xyz2}
Let $G$ be a 3-connected outer-fan-planar graph with at least six
vertices. If $G$ contains a $K_4$ with a vertex of degree 3, then
this $K_4$ has to be drawn consecutively on circle $\mathcal{C}$ in
any outer-fan-planar drawing of $G$.
\end{lemma}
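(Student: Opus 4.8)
The plan is to invoke Lemma~\ref{lemma:outer_straight} and work with a straight-line drawing on a circle $\mathcal{C}$, reformulating the statement combinatorially. I denote the $K_4$ by $\{w,a,b,c\}$, where $w$ is the degree-$3$ vertex, so $N(w)=\{a,b,c\}$. Relabelling $a,b,c$ if necessary, I may assume their cyclic order on $\mathcal{C}$ is $w,a,b,c$; then, among the six edges of the $K_4$, the only crossing is the one between the two diagonals $wb$ and $ac$. Partitioning the remaining $n-4$ vertices according to the four open arcs $U_1=(w,a)$, $U_2=(a,b)$, $U_3=(b,c)$, $U_4=(c,w)$, the four vertices appear consecutively on $\mathcal{C}$ if and only if at most one of $U_1,\dots,U_4$ is nonempty. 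I would therefore assume, for a contradiction, that two of them are nonempty.

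The first step exploits the crossing $wb\times ac$ together with the fact that, by the forbidden pattern in Fig.~\ref{fig:fanplanar1}, neither diagonal may be crossed by two independent edges. The chord $ac$ separates $U_1\cup U_4$ from $U_2\cup U_3$, while $wb$ separates $U_1\cup U_2$ from $U_3\cup U_4$; hence an edge joining vertices of two distinct arcs necessarily crosses at least one of the two diagonals, and it is independent of that diagonal (its endpoints lie in the $U_i$'s, whereas the diagonal ends at $w,b$ or $a,c$). Such an edge is thus forbidden, so the only neighbours of a vertex of $U_i$ lying outside $U_i$ are among $a,b,c$. The second step is a routine $3$-connectivity argument: if a nonempty $U_i$ had no neighbour equal to some $t\in\{a,b,c\}$, then by the first step its outside neighbours would all lie in $\{a,b,c\}\setminus\{t\}$, a pair of vertices separating $U_i$ from $w$, contradicting $3$-connectivity; hence every nonempty $U_i$ is adjacent to all of $a$, $b$ and $c$.

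The final step locates, for any choice of two nonempty arcs, a concrete edge crossed by two independent edges. If the two arcs lie on opposite sides of $wb$ (one in $\{U_1,U_2\}$, the other in $\{U_3,U_4\}$), I take a vertex of the first adjacent to $c$ and a vertex of the second adjacent to $a$; the corresponding two edges both cross $wb$ and are independent, which is the contradiction. The only remaining configurations are the two same-side pairs $\{U_1,U_2\}$ and $\{U_3,U_4\}$: for $\{U_1,U_2\}$ I pick $v_1\in U_1$ adjacent to $b$ and $v_2\in U_2$ adjacent to $c$, and observe that the edge $v_1b$ is crossed both by $wa$ (since $a$ lies on the arc from $v_1$ to $b$, while $w$ does not) and by $v_2c$ (since $v_2$ lies on that arc, while $c$ does not), these two crossing edges being independent; the pair $\{U_3,U_4\}$ is symmetric. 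I expect this last step to be the main obstacle: the offending edge is not always a diagonal, so one must check that in every placement of the two nonempty arcs the edges forced by the second step genuinely yield an independent pair of crossings.
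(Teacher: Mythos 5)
Your proof is correct, but it is organized quite differently from the paper's. The paper cases on the \emph{type} of the unique crossing pair inside the $K_4$ (two $2$-hops, two long edges, or a long edge and a $2$-hop) and, in the two bad cases, argues that all four vertices of the $K_4$ would have degree at least four, contradicting the degree-$3$ hypothesis only at the very end. You instead put the degree-$3$ vertex $w$ at the centre from the start: you partition the remaining vertices into the four arcs $U_1,\dots,U_4$, show that no edge joins two distinct arcs (so each nonempty arc must, by $3$-connectivity, send an edge to each of $a,b,c$), and then exhibit, for every pair of nonempty arcs, a concrete edge crossed by two independent edges. The core mechanism is the same in both proofs --- $3$-connectivity forces an edge out of an arc, outer-fan-planarity restricts its other endpoint to a $K_4$-vertex, and two such forced edges on opposite sides of a chord violate forbidden pattern~I (this is exactly the paper's argument for the fourth vertex $v_2$ in its mixed case) --- but your arc-occupancy decomposition buys a more uniform and self-contained treatment: in particular, your case (a) supplies a full argument for the configuration that the paper dismisses in a single unproved sentence (``if the $K_4$ contains two crossing long edges, then each of its vertices is incident to an outer edge not contained in the $K_4$''), and your case (b) handles the adjacent-arc pairs that the paper's degree count folds into the $2$-hop/long case. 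One cosmetic remark: in your first step the relevant independence is between the transversal edge and the \emph{other} diagonal (the edge already crossing the one being hit), not the diagonal it crosses; since the transversal edge has both endpoints in the arcs and hence avoids $\{w,a,b,c\}$ entirely, it is independent of both diagonals and your conclusion stands.
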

\begin{proof}
Observe that any outer-fan-planar drawing of a $K_4$ contains
exactly one pair of crossing edges. If two 2-hops cross, then all
vertices of the $K_4$ are consecutive. If the $K_4$ contains two
crossing long edges, then each of the vertices of the $K_4$ is
incident to an outer edge not contained in the $K_4$; thus, has
degree at least four. If a long edge and a 2-hop cross, assume that
the vertices around $\mathcal{C}$ are labeled such that
$v_1,v_2,v_3,v_k$ induce a $K_4$ for some $5 \leq k < n$; see
Fig.~\ref{fig:fan-planar-properties-7}. Since $v_1$, $v_3$ and $v_k$
are incident to an outer edge not contained in the $K_4$, they have
degree at least four. We claim that $v_2$ has degree at least four.
Since $v_3$ and $v_k$ is not a separation pair, there is an edge
between a vertex among $v_4,\dots,v_{k-1}$ and $v_2$ or $v_1$ and an
edge between a vertex among $v_{k+1},\dots,v_{n}$ and $v_2$ or
$v_3$. Choosing $v_1$ and $v_3$ in the first and second case
respectively, yields two independent edges crossing $\{v_2,v_k\}$.
So, $v_2$ is connected to a vertex outside $K_4$. \qed
\end{proof}

\begin{lemma}
Let $G$ be a 3-connected graph with $n \geq 5$ vertices and let $v\in
V[G]$ be a vertex of degree three that is contained in a $K_4$.
Then, $G - \{v\}$ is 3-connected. \label{LEMMA:G-v-3con}
\end{lemma}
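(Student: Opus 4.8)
The plan is to check the two defining conditions of 3-connectivity for $G-\{v\}$ separately: that it has more than three vertices, and that it has no separation pair. The first is immediate, since $G-\{v\}$ has $n-1\ge 4$ vertices. All the work therefore goes into the second condition, and I would argue by contradiction, transferring any hypothetical separation pair of $G-\{v\}$ back into a separation pair of $G$.

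First I would record the structural consequence of the hypotheses. Since $v$ lies in a $K_4$ and has degree exactly three, its three neighbours --- call them $a$, $b$, $c$ --- are precisely the other three vertices of that $K_4$, and hence are pairwise adjacent. Thus in $H:=G-\{v\}$ the set $\{a,b,c\}$ induces a triangle, and $v$ is attached to the rest of $G$ only through $a$, $b$, $c$.

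Now suppose, for contradiction, that $\{x,y\}$ is a separation pair of $H$, so that $H-\{x,y\}$ is disconnected. I would then look at $G-\{x,y\}$, which is obtained from $H-\{x,y\}$ by re-inserting $v$ together with its edges to whichever of $a$, $b$, $c$ do not lie in $\{x,y\}$. Because $\{a,b,c\}$ has three elements while $\{x,y\}$ has only two, at least one neighbour of $v$ survives in $H-\{x,y\}$. The key observation is that all surviving neighbours of $v$ lie in a single connected component of $H-\{x,y\}$: any two of $a$, $b$, $c$ that both avoid $\{x,y\}$ are joined by an edge of the triangle and hence lie in the same component. Consequently, re-inserting $v$ merely attaches it to that one component and cannot connect two distinct components of $H-\{x,y\}$. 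Therefore $G-\{x,y\}$ is still disconnected, so $\{x,y\}$ is a separation pair of $G$, contradicting the 3-connectivity of $G$.

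The same counting argument, applied to a single vertex $\{x\}$ (using $3>1$), simultaneously rules out a cut vertex of $H$, so $H$ is connected and free of separation pairs; combined with $|V(H)|>3$ this yields 3-connectivity. The only delicate point --- and the step I expect to be the main obstacle --- is the bookkeeping that all surviving neighbours of $v$ remain in one component. This is exactly where the triangle structure of $a$, $b$, $c$ (guaranteed by the $K_4$ containing $v$) is essential, as it is what prevents $v$ from bridging the two sides of a separation; without the $K_4$ hypothesis, $v$ could reconnect $H-\{x,y\}$ and the transfer would fail.
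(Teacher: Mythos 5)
Your proof is correct and rests on the same key observation as the paper's, namely that the three neighbours of $v$ are pairwise adjacent, so deleting $v$ cannot create a new separation. The paper argues in the forward direction by rerouting any $a$--$b$ path of $G-\{c,d\}$ around $v$ using an edge between two of its neighbours, while you argue the contrapositive by showing a separation pair of $G-\{v\}$ would persist in $G$; these are two phrasings of essentially the same argument.
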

\begin{proof}
Let $a$, $b$, $c$ and $d$ be four arbitrary vertices of $G - \{v\}$.
Since $G$ was 3-connected, there was a path $P$ from $a$ to $b$ in
$G -\{c,d\}$. Assume that $P$ contains $v$. Since $v$ is only
connected to vertices that are connected to each other, there is
also another path from $a$ to $b$ in $G - \{c,d\}$ not containing
$v$. Hence, $a$ and $b$ cannot be a separation pair in $G - \{v\}$.
Since $a$ and $b$ were arbitrarily selected, $G-\{v\}$ is
3-connected.\qed
\end{proof}

\begin{lemma}
Let $G$ be a 3-connected graph with $n > 6$ vertices, let $v_1$,
$v_2$, $v_3$ and $v_4$ be four vertices that induce a $K_4$, such
that the degree of $v_3$ is three. Then, $G-\{v_3\}$ has a maximal
outer-fan-planar drawing if $G$ has a maximal outer-fan-planar
drawing. \label{LEMMA:G-v}
\end{lemma}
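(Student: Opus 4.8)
The plan is to start from a maximal outer-fan-planar drawing $\Gamma$ of $G$, delete $v_3$, and prove that the remaining drawing is again a maximal outer-fan-planar drawing of $G-\{v_3\}$. By Lemma~\ref{lemma:outer_straight} I may assume that $\Gamma$ places the vertices on a circle $\mathcal{C}$, so that the embedding is determined by their cyclic order. By Lemma~\ref{LEMMA:xyz2} the vertices $v_1,v_2,v_3,v_4$ appear consecutively on $\mathcal{C}$, and by Lemma~\ref{LEMMA:xyz1} the degree-$3$ vertex $v_3$ is neither the first nor the last of the four. Writing the consecutive block as $u_1,u_2,u_3,u_4$, I may assume (up to reflection) that $v_3=u_2$, with neighbours $u_1,u_3,u_4$. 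The edges $v_3u_1$ and $v_3u_3$ join consecutive vertices of $\mathcal{C}$ and hence cross nothing, while $v_3u_4$ is a $2$-hop whose only possible crossing partners are the edges incident to $u_3$ (the unique vertex between $u_2$ and $u_4$); see Fig.~\ref{fig:fan-planar-properties-8}. Let $\Gamma'=\Gamma-\{v_3\}$. Deleting a vertex and its incident edges only removes crossings, so $\Gamma'$ is outer-fan-planar; it remains to show it is maximal.

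Recall that for outer drawings forbidden pattern~II is irrelevant, so a drawing is outer-fan-planar exactly when no edge is crossed by two independent edges. The key observation is that, as far as crossings go, passing from $\Gamma$ to $\Gamma'$ only deletes the single $2$-hop $v_3u_4$ (the other two edges of $v_3$ being inert). I argue by contradiction: suppose some $e=\{x,y\}$ with $x,y\neq v_3$ (so $e\notin E[G]$) can be added to $\Gamma'$ preserving outer-fan-planarity. Since $\Gamma$ and $\Gamma'$ agree on all vertex positions except $v_3$ and $e$ avoids $v_3$, the edge $e$ crosses exactly the same non-$v_3$ edges in both drawings. Hence any pair of independent edges that would forbid $e$ in $\Gamma$ without using an edge of $v_3$ would forbid $e$ in $\Gamma'$ as well, contradicting our assumption. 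Therefore every obstruction to $e$ in $\Gamma$ must use an edge of $v_3$, and since $v_3u_1,v_3u_3$ cross nothing, it must use the $2$-hop $v_3u_4$.

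Analysing this pins down the dangerous configurations. Any edge crossed by $v_3u_4$ is incident to $u_3$, say $g=\{u_3,w\}$; moreover the $K_4$-edge $u_1u_4$ always crosses $g$ as well and shares the endpoint $u_4$ with $v_3u_4$, so in $\Gamma$ the crossings on $g$ form a fan at $u_4$. Because $u_1u_4$ survives in $\Gamma'$, the new edge $e$ can cross $g$ in $\Gamma'$ only if it shares the vertex $u_1$ with $u_1u_4$; thus the representative case is $e=\{u_1,x\}$ with $x$ strictly between $u_4$ and $w$ on $\mathcal{C}$, and with $g$ crossed in $\Gamma$ by exactly $u_1u_4$ and $v_3u_4$. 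The only other possibilities are the symmetric ones in which $e$ itself crosses $v_3u_4$, and these are handled the same way.

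The main obstacle is precisely to rule out this last configuration, and this is where $3$-connectivity of $G$ is essential, in the spirit of Lemmas~\ref{LEMMA:cross_consec} and~\ref{LEMMA:scissor}. The crossing long edges $g=\{u_3,w\}$ and $u_1u_4$, together with the forbidden pattern, confine the vertices lying beyond $u_4$: none of them may send an edge across $g$ except through $u_3$ or $u_4$. I plan to show that, since $G$ has no separation pair, these forced incidences must produce \emph{two independent} edges that already cross $e$ in $\Gamma$; being non-$v_3$ edges, they also cross $e$ in $\Gamma'$, so $e$ is re-blocked there and no addable edge exists. The delicate point, which I expect to need the region/separation-pair bookkeeping of the earlier lemmas, is to verify that $3$-connectivity always forces such an independent pair; this hypothesis cannot be dropped, since a drawing with a degree-$2$ vertex in place of this region genuinely admits the edge $e$. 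Finally, Lemma~\ref{LEMMA:G-v-3con} guarantees that $G-\{v_3\}$ stays $3$-connected, so the deletion step may be iterated, which is the intended use of this lemma.
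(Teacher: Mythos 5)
Your setup and the first reduction are sound and match the paper's opening paragraph: you correctly restrict to a circle drawing, invoke Lemmas~\ref{LEMMA:xyz1} and~\ref{LEMMA:xyz2} to place the degree-3 vertex in the interior of a consecutive $K_4$, observe that only its single 2-hop participates in any crossing, and deduce that a hypothetical new edge $e$ addable to $\Gamma'$ but not to $\Gamma$ must be incident to the far outer vertex of that 2-hop (your $u_1$, the paper's $v_4$). Up to the relabelling by reflection, this is exactly the paper's first paragraph.

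However, from that point on you have only a plan, not a proof, and the part you defer is the entire substance of the lemma. You write that you ``plan to show'' that 3-connectivity forces two independent edges already crossing $e$, and that you ``expect to need'' separation-pair bookkeeping; nothing of this is carried out. The paper's argument here is a specific, non-routine chain: (i) if $i$ is maximal with $\{v_2,v_i\}\in E$ then $i=n$, else $v_1,v_i$ is a separation pair of $G$; (ii) if $j>4$ is minimal with $\{v_2,v_j\}\in E$ then $j=5$, else $v_4,v_j$ separates $G-\{v_3\}$ plus $e$, contradicting Lemma~\ref{LEMMA:G-v-3con}; (iii) \emph{maximality} of $G$ then forces $\{v_1,v_5\}\in E$, because that edge could otherwise be added without violating outer-fan-planarity; (iv) since $e$ and $\{v_2,v_n\}$ both cross $\{v_1,v_5\}$, fan-planarity of $\Gamma'$ plus $e$ pins $e=\{v_4,v_n\}$; (v) then $v_5,v_n$ is a separation pair of $G$, the final contradiction. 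Note in particular that step (iii) uses the maximality hypothesis in an essential way, and that the eventual contradiction is a separation pair in $G$, not (as your sketch anticipates) a pair of independent edges re-blocking $e$ in $\Gamma'$ --- so even the shape of your intended endgame does not obviously go through. Your one-line dismissal of the case in which $e$ itself crosses the 2-hop (``handled the same way'') is likewise unsubstantiated, since there the failing fan condition sits on $e$ rather than on an edge $g$ incident to $u_3$, and the analysis is not symmetric. As it stands the proposal establishes roughly the first third of the argument and leaves the hard core open.
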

\begin{proof}
Consider a maximal outer-fan-planar drawing of $G$ on a circle
$\mathcal{C}$ and let $v_1$, $v_2$, $v_3$, $v_4,\dots,v_{n}$ be the
order of the vertices on $\mathcal{C}$ (recall
Lemma~\ref{LEMMA:xyz2}). Assume to the contrary that after removing
$v_3$, we could add an edge $e$ to the drawing; see
Fig.~\ref{fig:fan-planar-properties-8}. By Lemma~\ref{LEMMA:xyz1},
$\{v_3,v_1\}$ is the only edge incident to $v_3$ that crosses some
edges of $G-\{v_3\}$. Hence, there must be an edge $e'$ that is
crossed by $e$ and $\{v_3,v_1\}$. Since $\{v_3,v_1\}$ crosses only
edges incident to $v_2$ that also cross $\{v_1,v_4\}$, it follows
that $e'$ has to be incident to $v_2$. Further, since $G-\{v_3\}$
plus $e$ is outer-fan-planar it follows that $e$ is incident to
$v_1$ or $v_4$. Moreover, since $G$ plus $e$ is not outer-fan-planar
it follows that $e$ is incident to $v_4$.

Let $i$ be maximal so that there is an edge $\{v_2,v_i\}$. If $i
\neq n$, then $v_1$ and $v_i$ is a separation pair: Any edge
connecting $\{v_{i+1},\dots,v_{n-1}\}$ to
$\{v_2,v_3,\dots,v_{i-1}\}$ and not being incident to $v_2$ crosses
$\{v_2,v_i\}$. But edges crossing $\{v_2,v_i\}$ can only be incident
to $v_1$, a contradiction. Now, let $j > 4$ be minimum such that
there is an edge $\{v_2,v_j\}$. We claim that $j = 5$. If this is
not the case, then similarly to the previous case $v_4$ and $v_j$
would be a separation pair in $G-\{v_3\}$ plus $e$, which is not
possible due to Lemma~\ref{LEMMA:G-v-3con}.

It follows that $G$ has to contain edge $\{v_1,v_5\}$: Since $G$ is
outer-fan-planar, in $G$ there cannot be an edge $\{v_4,v_k\}$ for
some $k=6,\dots,n$, since it would cross $\{v_2,v_5\}$ which is
crossed by $\{v_3,v_1\}$. So, $\{v_1,v_5\}$ crosses only edges
incident to $v_2$ that are already crossed by $\{v_3,v_1\}$ and
$\{v_4,v_1\}$. Hence, $\{v_1,v_5\}$ could be added to $G$ without
violating outer-fan-planarity; a clear contradiction. Since $e$ and
$\{v_2,v_{n}\}$ both cross $\{v_1,v_5\}$ it follows that
$e=\{v_4,v_{n}\}$. But now, $v_5$ and $v_{n}$ has to be a separation
pair.\qed
\end{proof}

\begin{remark}\label{REM:6}
Let $G$ be a graph with 6 vertices containing a vertex $v$ of degree
three.  Then, $G$ is maximal outer-fan-planar if and only if
$G-\{v\}$ is a $K_5$ missing one of the edges that connects a
neighbor of $v$ to one of the other two vertices.
\end{remark}

\begin{lemma}\label{LEMMA:2-hop}
It can be tested in linear time whether a graph is a complete 2-hop
graph. Moreover, if a graph is a complete 2-hop graph, then it has a
constant number of outer-fan-planar embeddings and these can be
constructed in linear time.
\end{lemma}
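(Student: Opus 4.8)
The plan is to recognise the complete 2-hop graph as the abstract graph $C_n^2$, the square of the $n$-cycle, and to reconstruct the cyclic vertex order $v_1,\dots,v_n$ from $G$ alone. First I would dispose of the small cases $n\le 6$ by direct inspection: for $n\le 5$ every edge is an outer edge or a 2-hop, so a complete 2-hop graph is exactly $K_n$, while for $n=6$ it is the octahedron. Each of these is a single fixed graph, so checking membership and listing all of its outer-fan-planar embeddings is constant work.

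For $n\ge 7$ the key observation is that outer edges and 2-hops can be distinguished intrinsically. I would first verify that $G$ is $4$-regular with $2n$ edges, rejecting otherwise, and then, for every edge $\{u,w\}$, compute $|N(u)\cap N(w)|$; since all degrees equal $4$ this costs $O(1)$ per edge. A short calculation shows that in $C_n^2$ the endpoints of an outer edge have exactly two common neighbours, whereas the endpoints of a 2-hop have exactly one, so the edges with two common neighbours are precisely the $n$ outer edges. These form a $2$-regular subgraph; traversing it checks that it is a single Hamiltonian cycle and at the same time outputs a candidate order $\pi=v_1,\dots,v_n$. Finally I would verify that the remaining $n$ edges are exactly the pairs $\{v_i,v_{i+2}\}$. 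This last step is a complete verification, so the test is sound regardless of the heuristic used to guess $\pi$, and the whole procedure runs in $O(n)$ time.

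For the number of embeddings, I would argue that the order $\pi$ is essentially forced. By Lemma~\ref{lemma:outer_straight} an outer-fan-planar embedding is determined by the cyclic order of the vertices on $\mathcal{C}$. Once one establishes that the complete 2-hop graph is maximal outer-fan-planar, no edge can be added to any of its drawings, hence every outer-fan-planar drawing is in fact maximal; and since $C_n^2$ contains no $K_4$ for $n\ge 7$, Lemma~\ref{LEMMA:maxk4} then forbids long edges in any such drawing. Consequently every outer-fan-planar drawing consists only of outer edges and 2-hops and therefore realises $G$ again as a complete 2-hop graph. Because the outer edges are characterised intrinsically by having two common neighbours, every admissible order must traverse the same Hamiltonian cycle of outer edges, so $\pi$ is unique up to rotation and reflection and yields a single embedding; the cases $n\le 6$ contribute only finitely many by sheer finiteness. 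Producing the embedding is exactly the cycle traversal above, so it costs $O(n)$ time.

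The hard part will be the rigidity needed for the embedding count rather than the test itself. One must establish that complete 2-hop graphs are maximal outer-fan-planar, so that every drawing is a maximal drawing and Lemma~\ref{LEMMA:maxk4} becomes applicable, and one must confirm that outer edges are intrinsically recognisable. The genuinely degenerate point is $n=6$: there the octahedron is edge-transitive and the common-neighbour counts no longer separate outer edges from 2-hops, which is precisely why a separate, constant-size treatment of the small cases cannot be avoided.
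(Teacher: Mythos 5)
Your recognition algorithm is correct but takes a genuinely different route from the paper's. The paper seeds an ordering with $v_1$, one of its at most four neighbours as $v_2$, and a common neighbour of $v_1,v_2$ as $v_3$ (at most six seeds in total), extends it greedily by the rule that $v_{i+1}$ must be the unique remaining common neighbour of $v_{i-1}$ and $v_i$, and verifies the result. You instead classify edges intrinsically by $|N(u)\cap N(w)|$ (two common neighbours for outer edges, one for 2-hops when $n\ge 7$) and read the ordering off the Hamiltonian cycle formed by the outer edges. Both run in linear time; your common-neighbour counts are correct, and you rightly isolate $n=6$ (the octahedron) as the case where the counts degenerate and a separate constant-size treatment is needed. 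What the paper's seed-and-extend buys is that the bound of six candidate orderings --- hence the constant number of complete-2-hop realisations --- falls out of the enumeration itself, with no appeal to maximality; your intrinsic characterisation instead gives uniqueness of the ordering up to rotation and reflection in one shot for $n\ge 7$.

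The genuine gap is in your embedding count. To conclude that \emph{every} outer-fan-planar drawing of a complete 2-hop graph uses only outer edges and 2-hops, you invoke Lemma~\ref{LEMMA:maxk4}, which applies only to \emph{maximal} outer-fan-planar drawings, and you bridge this by asserting that complete 2-hop graphs are maximal outer-fan-planar --- a claim you explicitly leave unproven (``one must establish\dots''). As written this is a missing step, not a routine one: it is precisely the rigidity statement that makes the count work, and nothing earlier in your argument supplies it; without it you have only bounded the number of orderings in which $G$ is drawn as a complete 2-hop graph, not the number of its outer-fan-planar embeddings. (The paper's own proof is admittedly silent on the same distinction, tacitly identifying the two, but your write-up makes the dependence explicit and then does not discharge it.) The testing half of your proposal, including soundness via the final complete verification of outer edges and 2-hops, is fine.
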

\begin{proof}
Let $G$ be an $n$-vertex graph. We test whether $G$ is a complete
2-hop as follows. If $n\in \{4,5\}$, then $G$ is either $K_4$ or
$K_5$. Otherwise, check first whether all vertices have degree four.
If so, pick one vertex as $v_1$, choose a neighbor as $v_2$ and a
common neighbor of $v_1$ and $v_2$ as $v_3$ (if no such common
neighbor exists, then $G$ is not a complete 2-hop). Assume now that
we have already fixed $v_1,\dots,v_i$, $3 \leq i < n$. Test whether
there is a unique vertex $v \in V \setminus \{v_1,\dots,v_i\}$ that
is adjacent to $v_i$ and $v_{i-1}$. If so, set $v_{i+1}=v$.
Otherwise reject. If we have fixed the order of all vertices check
whether there are only outer edges and 2-hops. Do this for any
possible choices of $v_2$ and $v_3$, i.e., for totally at most 6
choices.\qed
\end{proof}

\begin{remark}\label{REM:2-hop+}
No degree 3 vertex can be added to an $n$-vertex complete 2-hop with
$n \geq 5$.
\end{remark}

We are now ready to describe our algorithm. If the graph is not a
complete 2-hop graph, recursively try to remove a vertex of degree 3
which is contained in a $K_4$. If $G$ is maximal outer-fan-planar,
Lemmas~\ref{LEMMA:maxk4} and~\ref{LEMMA:xyz1} guarantee that such a
vertex always exists in the beginning. Remark~\ref{REM:2-hop+}
guarantees that also in subsequent steps there is a long edge and,
thus, Lemmas~\ref{LEMMA:G-v-3con} and~\ref{LEMMA:G-v} guarantee that
also in subsequent steps, we can apply Lemma~\ref{LEMMA:maxk4} as
long as we have at least six vertices. Remark~\ref{REM:6} guarantees
that we can also remove two more vertices of degree 3 ending with a
triangle.

At this stage, we already know that if the graph is
outer-fan-planar, it is indeed maximal outer-fan-planar. Either, we
started with a complete 2-hop graph or we iteratively removed
vertices of degree three yielding a triangle. Note that in the
latter case we must have started with $3n-6$ edges. On the other
hand, if we apply the above procedure to an $n$-vertex 3-connected
maximal outer-fan-planar graph, we get that the number of edges is
exactly $2n$ or $3n-6$.

Finally, we try to reinsert the
vertices in the reversed order in which we have deleted them. By
Lemma~\ref{LEMMA:xyz2}, we can insert the vertex of degree three only
between its neighbor, i.e., there are at most two possibilities where
we could insert the vertex. Lemma~\ref{LEMMA:branch} guarantees that
in total, we have to check at most four possible drawings for $G$.
A summary of our approach is also given in Algorithm~\ref{ALGO:3con}.

\begin{algorithm}[p]
  \Input{3-connected graph $G=(V,E)$, subset $E' \subseteq E$}
  \Output{\textsc{true} if and only if $G$ is maximal outer-fan-planar and\\
    has an outer-fan-planar drawing in which edges in $E'$ are outer edges and if so\\
     all outer-fan-planar drawings of $G$ in which edges in $E'$ are outer edges}

  \BlankLine

  \Begin{
    mark all edges in $E'$\;
    \If{$G$ is a complete 2-hop graph}{
      \Return whether $G$ has an outer-fan-planar drawing with
      marked edges on outer face\;
    }
    \While{there are vertices of degree 3 contained in a $K_4$}{
      let $v$ be such a vertex, with neighbors $a,b,c$\;
        \If{$v$ is contained in three marked triangles or three marked edges}{
          \Return \textsc{false}\;
        }
        \ForAll{marked triangles $v,x,y$ containing $v$}{
          \lIf{the edge $\{x,y\}$ was not marked}{
            mark the edge $\{x,y\}$ with $v$
          }
        }
        $S$.\textsc{Push}($v$)\;
        remove $v$ from $G$\;
        mark the triangle $a,b,c$\;
    }
    \lIf{the remainder is not a triangle}{
      \Return \textsc{false}
    }
    \While{$S \neq \emptyset$}{
      $v \gets S.\textsc{Pop}$\;
      remove the mark from all edges marked $v$\;
      insert $v$ between two of its neighbors such that\\
      \hspace{0.2cm} $(a)$ all marked edges are outer edges and
      \hspace{0.2cm} $(b)$ outer-fan-planarity is preserved\;
      \lIf{both possibilities work}{branch}
      \lIf{no possibility works}{\Return \textsc{false}}
    }
    \Return \textsc{true}\;
  }
  \caption{\label{ALGO:3con}3-Connected Maximal Outer-Fan-Planarity}
\end{algorithm}

\begin{lemma}\label{LEMMA:branch}
When reinserting a sequence of degree 3 vertices starting from a
triangle, at most the first two vertices have two choices where they
could be inserted.
\end{lemma}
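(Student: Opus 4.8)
The plan is to run through the reinsertion one vertex at a time and to show that, once two vertices have been placed, the two candidate positions allowed by Lemmas~\ref{LEMMA:xyz2} and~\ref{LEMMA:xyz1} can never both survive the feasibility test of Algorithm~\ref{ALGO:3con}. First I would fix the local picture of a single reinsertion. Let $v$ be the degree-three vertex about to be reinserted and let $a,b,c$ be the triangle it was stacked on. If the current partial drawing is the restriction of a target outer-fan-planar drawing of $G$ and already has at least six vertices, then applying Lemma~\ref{LEMMA:xyz2} to this $3$-connected partial graph (it is $3$-connected by Lemma~\ref{LEMMA:G-v-3con}) forces $v,a,b,c$ to be consecutive with $v$ internal (Lemma~\ref{LEMMA:xyz1}); since no foreign vertex may lie between $a,b,c$ in the target drawing, the three neighbours are already consecutive, say in the order $p,q,r$. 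Thus $v$ can only go into one of the two inner gaps, giving the orders $p\,v\,q\,r$ or $p\,q\,v\,r$. Placing $v$ in a gap turns the outer edge spanning that gap into an internal $2$-hop, so the two candidates differ exactly in whether $\{p,q\}$ or $\{q,r\}$ loses its status as an outer edge; a candidate therefore survives condition~(a) precisely when the edge it would internalise is not marked, and the number of surviving candidates is $2$ minus the number of $\{p,q\},\{q,r\}$ currently marked.

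The core of the argument is the claim that, for every vertex reinserted from the third one on, exactly one of its two inner edges is marked at the moment of its reinsertion, so that its position is forced. To prove this I would unwind the marking discipline of Algorithm~\ref{ALGO:3con}: removing a vertex marks the triangle on its three neighbours, whenever an already marked triangle loses a corner the opposite edge inherits the mark carrying the label of the removed vertex, and during the reinsertion the marks labelled $v$ are stripped just before $v$ is placed. Reading this in reverse, I would maintain the invariant that after the first two reinsertions the partial drawing has, on the triangle of every vertex still on the stack, exactly one of the two inner edges marked — intuitively the edge already pinned to the outer boundary by the reconstructed part of the drawing, the other being the edge this very vertex will pin once placed (the third triangle edge $\{p,r\}$ is a $2$-hop and hence cannot be forced outer). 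A short induction on the stack, using that a vertex removed early accumulates a mark from each marked triangle it lies in, establishes the invariant, and the rejection test ``$v$ lies in three marked triangles or three marked edges'' is exactly what forbids the degenerate case of \emph{both} inner edges being marked; thus for a genuine maximal instance the count stays at one.

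It remains to account for the two exceptional reinsertions, and here the point is that the forcing lemmas are vacuous for the small graphs involved. The first reinserted vertex is placed into the base triangle, so the partial graph has only four vertices, Lemmas~\ref{LEMMA:xyz1} and~\ref{LEMMA:xyz2} do not yet apply, and before the invariant is seeded the triangle still has three outer edges, so marking at most one of them leaves as many as two admissible subdivisions — the first factor of two. The second reinsertion brings the partial graph to five vertices and creates the first long edge, whose only crossings form the fan at the middle neighbour (exactly the configuration analysed in Lemma~\ref{LEMMA:xyz2}); this vertex may still find both of its inner edges unmarked and contributes the second factor of two, but its placement orients the drawing relative to that long edge and seeds the invariant above. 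From the third vertex on the invariant applies and each insertion is forced, so Algorithm~\ref{ALGO:3con} branches at most twice and examines at most $2\cdot 2=4$ drawings, which is the assertion of Lemma~\ref{LEMMA:branch}.

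The main obstacle will be the marking bookkeeping of the second paragraph: one has to verify that the propagation rule faithfully tracks the forced-outer constraints of the partially reconstructed drawing, that stripping the $v$-labelled marks neither releases an edge that is still forced nor keeps one that has become free, and, most delicately, that the invariant ``exactly one inner edge marked'' is preserved at every reinsertion. Equivalently, the crux is to rule out, after the second step, a configuration of three mutually independent reinsertion gaps whose inner edges are simultaneously unmarked; this is precisely where the three-marked-triangles rejection test of Algorithm~\ref{ALGO:3con} earns its place.
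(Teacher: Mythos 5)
Your proof locates the forcing in the wrong mechanism. You derive the uniqueness of the insertion position from condition~$(a)$ of Algorithm~\ref{ALGO:3con} (marked edges must stay outer), via the invariant that ``from the third reinserted vertex on, exactly one of its two inner edges is marked.'' That invariant is false. Take $E'=\emptyset$ and a $7$-vertex instance, so that four vertices $u_1,u_2,u_3,u_4$ are removed (in this order) and reinserted in reverse. When $u_1$ is removed no triangle has been marked yet, so no edge ever receives the label $u_1$; hence when the \emph{third} vertex to be reinserted, namely $u_2$, is placed, the only marks that could still be present are those labelled $u_1$ --- and there are none. Both inner edges of $u_2$'s triangle are unmarked, yet the lemma asserts its position is forced. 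More generally, the marks only propagate the ``must remain an outer edge'' constraints needed for $E'$ and for nested removal triangles; they do not encode the crossing constraints of the partial drawing, which is where the forcing actually lives.

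The paper's argument is geometric and rests on condition~$(b)$: if $v$ is to be inserted into the outer triangle $v_1,v_2,v_3$, then the new $2$-hop $e$ incident to $v$ crosses exactly the edges at the middle vertex $v_2$ that already cross $\{v_1,v_3\}$, and $e$ shares the endpoint $v_3$ or $v_1$ with $\{v_1,v_3\}$ according to which gap $v$ occupies. So as soon as \emph{some} edge incident to $v_2$ is already crossed twice, its crossers form a fan whose apex must be $v_1$ or $v_3$, and $e$ is forced to be incident to that apex --- fixing the gap. The remaining work (and the part your sketch would need to replace its marking induction with) is to show that after the first two insertions every vertex that can play the role of $v_2$ in a later step is incident to an edge crossed at least twice; the paper does this by tracking the initial $K_4$ and observing that each newly inserted vertex immediately acquires such an edge, and that the fourth vertex of the initial $K_4$ can only become a middle vertex once its inner edges are crossed by a $2$-hop, at which point they are doubly crossed. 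Your outer frame (consecutive neighbours by Lemmas~\ref{LEMMA:xyz1} and~\ref{LEMMA:xyz2}, two gaps, first two insertions exceptional) matches the paper, but the core forcing step needs to be rebuilt on crossings rather than on marks.
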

\begin{proof}
Let $H$ be a outer-fan-planar graph and let three consecutive
vertices $v_1,v_2,v_3$ induce a triangle. Assume, we want to insert
a vertex $v$ adjacent to $v_1,v_2,v_3$. By Lemma~\ref{LEMMA:xyz1},
we have to insert $v$ between $v_1$ and $v_2$ or between $v_2$ and
$v_3$. Note that the edges that are incident to $v_2$ and cross
$\{v_1,v_3\}$ are also crossed by an edge $e$ incident to $v$. So,
if there is an edge incident to $v_2$ that was already crossed twice
before inserting $v$, this would uniquely determine whether $e$ is
incident to $v_1$ or $v_3$ and, thus, where to insert $v$.

We will now show that after the first insertion each relevant vertex
is incident to an edge that is crossed at least twice. When we
insert the first vertex we create a $K_4$. From the second vertex
on, whenever we insert a new vertex, it is incident to an edge that
is crossed at least twice. Also, after inserting the second degree 3
vertex, three among the four vertices of the initial $K_4$ are also
incident to an edge that is crossed at least twice. The forth vertex
of the initial $K_4$ is not the middle vertex of a triangle
consisting of three consecutive vertices. It can only become such a
vertex if its incident inner edges are crossed by a 2-hop. But then
these inner edges are all crossed at least twice. \qed
\end{proof}

Summarizing, we obtain the following theorem; in order to exploit
this result in the biconnected case, it is also tested whether a
prescribed subset (possibly empty) of edges can be drawn as outer
edges.

\begin{theorem}
Given a 3-connected graph $G$ with a subset $E'$ of its edge set, it
can be tested in linear time whether $G$ is maximal outer-fan-planar
and has an outer-fan-planar drawing such that the edges in $E'$ are
outer edges. Moreover if such a drawing exists, it can be
constructed in linear time.
\end{theorem}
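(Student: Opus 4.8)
The plan is to prove the correctness and the linear running-time bound of Algorithm~\ref{ALGO:3con}, whose output is exactly the conjunction claimed in the theorem. I would first dispose of the case handled at the top of the algorithm: by Lemma~\ref{LEMMA:2-hop} it can be decided in linear time whether $G$ is a complete $2$-hop graph, and in that case $G$ has only a constant number of outer-fan-planar embeddings, each constructible in linear time; for each of these I would test whether every edge of $E'$ appears as an outer edge and report the qualifying drawings. Since a complete $2$-hop graph is maximal, correctness here is immediate. For the remainder I assume $G$ is not a complete $2$-hop graph.

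Second, I would argue that the reduction loop is correct. Suppose $G$ is maximal outer-fan-planar. As $G$ is not a complete $2$-hop graph, any maximal outer-fan-planar drawing contains a long edge, so by Lemma~\ref{LEMMA:maxk4} it contains a $K_4$ whose four vertices are consecutive on $\mathcal{C}$, and by Lemma~\ref{LEMMA:xyz1} exactly one of them has degree three; hence the loop finds a degree-$3$ vertex $v$ contained in a $K_4$. By Lemmas~\ref{LEMMA:G-v-3con} and~\ref{LEMMA:G-v}, $G-\{v\}$ is again $3$-connected and maximal outer-fan-planar, and by Remark~\ref{REM:2-hop+} it is still not a complete $2$-hop graph as long as it has at least five vertices, so the same argument applies repeatedly and, invoking Remark~\ref{REM:6} for the last steps, the process terminates at a triangle. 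Conversely, if at some stage no degree-$3$ vertex contained in a $K_4$ exists while the current graph is neither a triangle nor (for the first call) a complete $2$-hop graph, then by the contrapositive of these lemmas $G$ cannot be maximal outer-fan-planar and the algorithm correctly rejects. A counting argument shows that the two accepting cases have exactly $2n$ edges (complete $2$-hop) or $3n-6$ edges (triangle reduction), which together with the reinsertion test certifies maximality rather than mere outer-fan-planarity.

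Third, I would establish correctness of the reinsertion phase. Reinserting the stack of removed vertices in reverse order, Lemma~\ref{LEMMA:xyz2} forces each degree-$3$ vertex to be placed between two of its three neighbors, leaving at most two candidate positions, and Lemma~\ref{LEMMA:branch} guarantees that only the first two reinsertions can genuinely branch; hence at most four candidate cyclic orderings for $G$ arise. For each candidate I would verify in linear time that the induced straight-line drawing on $\mathcal{C}$ is outer-fan-planar---by Lemma~\ref{lemma:outer_straight} this is determined solely by the cyclic order---and that every edge of $E'$ is an outer edge. The $E'$ requirement is propagated correctly by the marking mechanism: when a vertex $v$ lying in a marked triangle $v,x,y$ is deleted, the edge $\{x,y\}$ is marked with $v$, and this mark is removed exactly when $v$ is reinserted, so at every stage the marked edges record precisely the outer-edge constraints inherited from $E'$.

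Finally, for the running time and the main obstacle. Since fan-planar graphs have $O(n)$ edges, each full pass over $G$ costs linear time; the reduction deletes $n-3$ vertices and the reinsertion processes the same vertices across at most four branches, so the crux is to perform each deletion and each search for a degree-$3$ vertex in a $K_4$ in amortized constant time. I expect this bookkeeping to be the main obstacle: one must maintain, under vertex deletions, a list of degree-$3$ vertices annotated with whether each lies in a $K_4$, updating only the constantly many affected neighbors after every deletion so that the entire reduction runs in linear time; the analogous incremental maintenance during reinsertion, combined with the constant bound on the number of branches from Lemma~\ref{LEMMA:branch}, then yields the claimed overall linear running time.
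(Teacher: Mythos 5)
Your proposal is correct and follows essentially the same route as the paper: it proves correctness and linearity of Algorithm~\ref{ALGO:3con} by combining Lemma~\ref{LEMMA:2-hop} for the complete 2-hop case, Lemmas~\ref{LEMMA:maxk4}, \ref{LEMMA:xyz1}, \ref{LEMMA:G-v-3con}, \ref{LEMMA:G-v} and Remarks~\ref{REM:2-hop+}, \ref{REM:6} for the reduction loop, and Lemmas~\ref{LEMMA:xyz2} and \ref{LEMMA:branch} for the at most four reinsertion branches. The only notable difference is that where you propose a linear-time verification of each final candidate ordering, the paper makes each reinsertion test constant time by checking locally whether $v_1$ or $v_3$ has an edge other than $\{v_1,v_3\}$ crossing an edge incident to $v_2$ --- a concrete realization of the bookkeeping you correctly flag as the main obstacle.
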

\begin{proof}
Let $n$ be the number of vertices. By Lemma~\ref{LEMMA:2-hop}, a
complete 2-hop graph has only a constant number of outer-fan-planar
embeddings which can be computed in linear time. In the other case,
any vertex that was removed from the queue will never be appended
again. Hence, there are at most $n$ iterations in the first part of
Algorithm~\ref{ALGO:3con}.

To check whether the degree three vertices can be reinserted back in
the graph, we only have to consider in total four different
embeddings. Assume that we want to insert a vertex $v$ into an outer
triangle $v_1,v_2,v_3$. Then we just have to check whether $v_1$ or
$v_3$ are incident to edges other than the edge $\{v_1,v_3\}$ that
cross an edge incident to $v_2$. This can be done in constant time
by checking only two pairs of edges.\qed
\end{proof}

\subsection{The Biconnected Case}
\label{sec:bicon}

We now show how to test outer-fan-planar maximality on a biconnected graph.

\begin{lemma}\label{LEMMA:porous}
Let $v_1,\dots,v_n$ be the order of the vertices around the circle
in an outer-fan-planar drawing of a 3-connected graph $G$. If we can
add a vertex $v$ between $v_1$ and $v_n$ with an edge $\{v,v_i\}$
for some $i=2,\dots,n-1$, then $i=2$ or $i=n-1$.
\end{lemma}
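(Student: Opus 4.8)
The plan is to argue by contradiction: suppose that $v$ can be inserted between $v_1$ and $v_n$ together with an edge $\{v,v_i\}$ for some $3\le i\le n-2$, yielding a valid outer-fan-planar drawing, and then exhibit a separation pair in $G$, contradicting its $3$-connectivity. As observed in the proof of Lemma~\ref{lemma:outer_straight}, once all vertices lie on $\mathcal{C}$ the crossing pattern is determined purely by the circular order, which here reads $v,v_1,\dots,v_n$ around $\mathcal{C}$, with $v$ occupying the gap between $v_n$ and $v_1$.

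First I would identify which edges of $G$ cross the new edge $\{v,v_i\}$. The chord $\{v,v_i\}$ splits the remaining vertices into the two arcs $A=\{v_1,\dots,v_{i-1}\}$ and $B=\{v_{i+1},\dots,v_n\}$, and a chord $\{v_a,v_b\}$ of $G$ crosses $\{v,v_i\}$ exactly when it joins $A$ to $B$ (edges incident to $v_i$ share the endpoint $v_i$ and hence do not cross it). The hypothesis $3\le i\le n-2$ guarantees $|A|\ge 2$ and $|B|\ge 2$, which is the key numerical fact I will use to keep both sides nonempty after deleting a single vertex from them.

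Next I would use $3$-connectivity to guarantee a crossing and then fan-planarity to pin it down. If no edge of $G$ joins $A$ to $B$, then $v_i$ is the only $G$-vertex not in $A\cup B$, so it alone separates $A$ from $B$, contradicting even $2$-connectivity; hence at least one edge crosses $\{v,v_i\}$. Since the drawing is fan-planar and every crossing edge joins $A$ to $B$, all of them must share a common endpoint $w\in A\cup B$; when there is only a single crossing edge I would simply take $w$ to be its endpoint lying in $A$.

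Finally I would convert this into a separation pair. Every path in $G$ from $A$ to $B$ must pass through $w$ or through $v_i$, because the only $A$–$B$ edges are the crossing edges (all incident to $w$) and $v_i$ is the only other available vertex. Thus $\{w,v_i\}$ is a separation pair: if $w\in A$ it separates the nonempty set $A\setminus\{w\}$ from the nonempty set $B$, and symmetrically if $w\in B$ it separates $A$ from $B\setminus\{w\}$; the bounds $|A|\ge2$ and $|B|\ge2$ are exactly what make both parts nonempty. This contradicts the $3$-connectivity of $G$, so necessarily $i\in\{2,n-1\}$. The main obstacle is this middle-to-final step — turning the purely local fan-planarity constraint on the single edge $\{v,v_i\}$ into the global statement that all $A$–$B$ traffic is funneled through the pair $\{w,v_i\}$ — and it is precisely there that both size bounds are needed to rule out the degenerate configurations in which no genuine separation pair would appear.
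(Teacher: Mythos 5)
Your proof is correct, but it reaches the contradiction from the opposite end than the paper does. The paper applies 3-connectivity twice --- once to the pair $\{v_1,v_i\}$ and once to $\{v_n,v_i\}$ --- to extract two concrete edges $\{v_k,v_n\}$ (with $2\le k\le i-1$) and $\{v_1,v_\ell\}$ (with $i+1\le \ell\le n-1$), and then observes that together with $\{v,v_i\}$ these form three pairwise crossing independent edges, violating fan-planarity. You instead invoke fan-planarity first, to conclude that all edges crossing $\{v,v_i\}$ (equivalently, all chords joining $A$ to $B$) share a common apex $w$, and then exhibit $\{w,v_i\}$ as a separation pair, violating 3-connectivity; the bounds $|A|\ge 2$ and $|B|\ge 2$ coming from $3\le i\le n-2$ are exactly what keep both sides of the cut nonempty. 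Your organization is arguably cleaner: it does not need to pin the second endpoints of the extracted edges down to $v_n$ and $v_1$ (a step the paper asserts with a ``hence'' but which really requires an extra argument, e.g.\ fan-planarity applied to the outer edge $\{v_1,v_n\}$), and it treats explicitly the degenerate case of a single crossing edge. Both arguments use the same two ingredients; they just trade which hypothesis supplies the final contradiction.
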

\begin{proof}
Otherwise, since $v_1,v_i$ cannot be a separation pair of $G$, there
has to be an edge from a $v_k$ for some $k=2,\dots,i-1$ that crosses
$\{v,v_i\}$ and hence an edge $\{v_k,v_n\}$. Since $v_n,v_i$ cannot
be a separation pair of $G$, there has to be an edge
$\{v_1,v_\ell\}$ for some $\ell=i+1,\dots,n-1$. But now there are
three independent edges crossing.\qed
\end{proof}

We say that an outer edge $\{v_1,v_n\}$ is \emph{porous} around
$v_1$ if we could add a vertex $v$ between $v_1$ and $v_n$ and an
edge $\{v,v_2\}$ maintaining outer-fan-planarity. Note that any edge
of a simple cycle, i.e., of the skeleton of an $S$-node is porous
around any of its end-vertices.  Any outer edge of a $K_4$ is porous
around any of its end-vertices; see Fig.~\ref{fig:porous}.

\begin{figure}[t]
  \centering
  \begin{minipage}[b]{.28\textwidth}
    \centering
    \subfloat[\label{fig:porous1}{}]
    {\includegraphics[width=\textwidth,page=1]{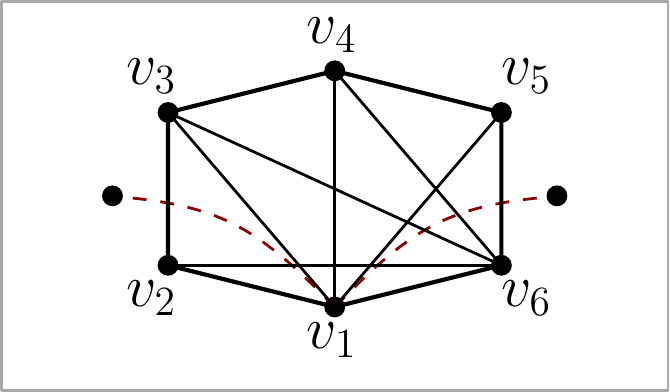}}
  \end{minipage}
  \begin{minipage}[b]{.28\textwidth}
    \centering
    \subfloat[\label{fig:porous2}{}]
    {\includegraphics[width=\textwidth,page=2]{porous}}
  \end{minipage}
  \caption{%
  (a)~In the solid graph, edge $\{v_2,v_3\}$ ($\{v_5,v_6\}$) is porous around $v_2$ ($v_6$, resp.).
  (b)~Illustration of Case~\ref{CASE:porous2} of Theorem~\ref{THEO:char}.}
  \label{fig:porous}
\end{figure}

We use the SPQR-tree of a biconnected graph to characterize whether
it is maximal outer-fan-planar.

\begin{theorem}
A biconnected graph is maximal outer-fan-planar iff the following
hold:
\begin{enumerate}[1)]
\item \label{CASE:Rnode} The skeleton of any $R$-node is maximal
outer-fan-planar and has an outer-fan-planar drawing in which all
virtual edges are outer edges,
\item \label{CASE:RS} No $R$-node is adjacent to an $R$-node or an $S$-node,
\item \label{CASE:Snode} All $S$-nodes have degree three,
\item \label{CASE:Pnode} All $P$-nodes have degree three and are adjacent to a
$Q$-node, and
\item \label{CASE:porous} Let $G_1$ and $G_2$ be the skeleton of the two neighbors of a
$P$-node other than the $Q$-node and let $\{s,t\}$ be the common
virtual edge of $G_1$ and $G_2$. Then, $G_i,i=1,2$ must not admit an
outer-fan-planar drawing with $t_i,s,t,s_i$ being consecutive around
the circle and
\begin{enumerate}[(a)]
\item \label{CASE:porous1} edge $\{s,t\}$ is porous in both $G_1$ and $G_2$ around the same
vertex, or
\item \label{CASE:porous2} edge $\{t_1,s\}$ ($\{s_2,t\}$) is real and porous around
$s$ ($t$, resp.), or
\item edge $\{s_1,t\}$ ($\{t_2,s\}$ ) is real and porous around $t$ ($s$, resp.).
\end{enumerate}
\end{enumerate}
\label{THEO:char}
\end{theorem}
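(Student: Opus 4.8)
The plan is to use the SPQR-tree to reduce maximal outer-fan-planarity of a biconnected graph to the 3-connected case settled in Section~\ref{subsec:tricon}, together with purely local conditions at the $S$-, $P$-, and $Q$-nodes. By Lemma~\ref{lemma:outer_straight} we may work exclusively with straight-line drawings on a circle $\mathcal{C}$, so an outer-fan-planar drawing is nothing but a cyclic order of the vertices. The structural fact I would establish first is that such a cyclic order is \emph{compatible} with the SPQR-decomposition: if $\{s,t\}$ is the pole pair shared across a tree edge $\{x,y\}$, then in any outer-fan-planar drawing all vertices lying on one side of this separation pair appear consecutively on $\mathcal{C}$ between $s$ and $t$. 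Contracting each such side to its virtual edge therefore yields an outer-fan-planar drawing of the skeleton $G_x$ in which every virtual edge is an \emph{outer} edge, and conversely any family of skeleton drawings with this property can be glued back, component by component, into a drawing of $G$. This correspondence turns the global question into one local question per node.

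For the forward direction I assume $G$ is maximal outer-fan-planar, fix such a drawing, and verify each condition by contradiction. Condition~\ref{CASE:Rnode} is immediate from the correspondence: the induced drawing of an $R$-node skeleton is outer-fan-planar with all virtual edges outer, and it is maximal, since any edge addable to the skeleton would be addable to $G$. For Condition~\ref{CASE:Snode}, an $S$-node of degree $\geq 4$ is a cycle on at least four poles whose absorbed components sit in its arcs; one checks that a chord between two non-adjacent poles can always be inserted without a forbidden crossing, contradicting maximality. Condition~\ref{CASE:Pnode} is argued similarly: a $P$-node bundles parallel components between a single pole pair $\{s,t\}$, and the circular placement together with the fan-crossing constraint admits at most two non-trivial components in the two arcs; were the real edge $\{s,t\}$ absent we could add it, so a $Q$-node must be present and the degree is exactly three. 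Condition~\ref{CASE:RS} follows because an $R$-node adjacent to an $R$- or $S$-node would leave its shared virtual edge without the buffer of a $Q$-node, again permitting an edge across the separation pair. Finally, Condition~\ref{CASE:porous} is exactly the statement that none of the porous configurations of Lemma~\ref{LEMMA:porous} survive at a $P$-node: each listed configuration is one in which a vertex or edge could be slipped between the two components, so maximality forbids it.

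For the backward direction I assume Conditions~\ref{CASE:Rnode}--\ref{CASE:porous} and build a drawing bottom-up along the tree. Each $R$-node is drawn via Condition~\ref{CASE:Rnode} with its virtual edges outer; the degree-three $S$- and $P$-nodes contribute triangles and parallel bundles that are glued into the arcs vacated by the corresponding virtual edges. Since every virtual edge is outer in its skeleton, the gluing introduces no new crossing, so the result is outer-fan-planar. Maximality is the substantive part: I would show that any candidate new edge $e$ either lies entirely inside one skeleton---excluded by the maximality in Condition~\ref{CASE:Rnode} and by the degree-three conditions---or crosses a separation pair, in which case $e$ would have to be inserted in the manner described by Lemma~\ref{LEMMA:porous}; Conditions~\ref{CASE:RS} and~\ref{CASE:porous} rule out precisely these insertions.

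The hard part will be Condition~\ref{CASE:porous} together with the maximality argument in the backward direction. Proving that \emph{no} edge can be added requires an exhaustive analysis of how a new chord can meet the separation pairs of the tree, and the porous configurations $(a)$--$(c)$ are exactly the delicate cases where an edge can be threaded between two components sharing a pole pair. Matching these configurations to the insertion options of Lemma~\ref{LEMMA:porous}, and checking that the ``same vertex'' and ``real and porous'' distinctions in~\ref{CASE:porous1}--\ref{CASE:porous2} exhaust every way such an edge could avoid a forbidden crossing, is where the bulk of the work lies.
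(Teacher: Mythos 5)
Your overall strategy---decompose via the SPQR-tree, reduce to the 3-connected case, and encode the gluing constraints as local conditions at $S$-, $P$-, and $Q$-nodes---is the same as the paper's. However, your argument rests on a ``structural fact'' that you state for \emph{arbitrary} outer-fan-planar drawings and propose to ``establish first,'' namely that the vertices on one side of a separation pair $\{s,t\}$ always appear consecutively on the circle. This is false in general: take the $6$-cycle $s,a_1,a_2,t,b_2,b_1$ (so $\{s,t\}$ is a separation pair with sides $\{a_1,a_2\}$ and $\{b_1,b_2\}$) and place the vertices in the circular order $s,a_1,b_1,a_2,b_2,t$. One checks that every edge is crossed either once or by a fan (e.g.\ $\{a_1,a_2\}$ is crossed only by $\{s,b_1\}$ and $\{b_1,b_2\}$, both incident to $b_1$ on the same side), so this is a valid outer-fan-planar drawing in which the two sides interleave. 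Consecutiveness only becomes true under the maximality hypotheses, and proving it there is precisely the technical heart of the paper's proof: the paper uses that the induced drawings of the two merged pieces are maximal and hence contain \emph{all} outer edges, deduces that a former outer edge not incident to $s$ or $t$ can become at worst a $2$-hop (so the circular order decomposes into at most one ``alternating'' stretch flanked by two homogeneous stretches), and then kills the alternating stretch by a case analysis using the $3$-connectivity of the skeleton. None of this is present in your proposal; you have in effect assumed the hardest step.

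A second, smaller issue is that you explicitly defer the analysis of Condition~5 (``where the bulk of the work lies''), but that analysis is not routine bookkeeping: the three subcases encode two genuinely different failure modes---(a) an edge from one $P$-node component to the other routed around a common porous end-vertex of $\{s,t\}$, versus (b)/(c) an edge crossing the real edge $\{s,t\}$ when one component can be nested entirely beside the other---and the paper needs Lemma~\ref{LEMMA:porous} together with the ``real and porous'' distinction to show these are the \emph{only} ways maximality can fail after gluing. As written, your proposal identifies the right skeleton of the argument but leaves both of its load-bearing steps unproved, and the first one is stated in a form that is actually incorrect.
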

\begin{proof}
Let $G$ be a biconnected graph.
\begin{description}
\item[$\mathbf \Leftarrow$:] Clearly, if \ref{CASE:Rnode} and
\ref{CASE:Pnode} are fulfilled, then $G$ is outer-fan-planar. Just
merge skeletons at common virtual edges such that one skeleton is in
the outer face of the other skeleton. It remains to show maximality.

The skeleton of each node is maximal outer-fan-planar. Assume now
that we have already merged some nodes of the SPQR-tree obtaining
a maximal outer-fan-planar graph $H$ and that we next want to
merge $H$ with a skeleton $G_x$ at a virtual edge $\{s,t\}$
obtaining a graph $H'$. There is nothing to show if $G_x$ is the
skeleton of a $P$-node. So assume that $G_x$ is a triangle or a
3-connected graph. Consider a fixed outer-fan-planar drawing of
$H'$.

We first show that the vertices of $G_x$ are consecutive on the
circle.  If $G_x$ is a triangle, this follows directly from
Lemma~\ref{LEMMA:porous}, Condition \ref{CASE:porous2}, and its
symmetric counter part. Assume now that $G_x$ is 3-connected. Note
that the outer-fan-planar drawing of $H'$ also induce
outer-fan-planar drawings of $G_x$ and $H$. By maximality these two
drawings contain all outer edges. Let now $e$ be an edge that was an
outer edge in one of the two subgraphs~--~say $G_x$~--~but not
incident to $s$ or $t$. Then, $e$ can only be an outer edge or a
2-hop in $H'$: if $e'$ is not an outer edge in $H$, it crosses at
least two outer edges of the other subgraph~--~here $H$. Hence,
starting from $s$ and $t$ the vertices must be ordered as follows
around the circle: first there might be one or more vertices of one
of the two subgraphs $G_x$ or $H$. Then, there might be
alternatingly a vertex from $H$ and $G_x$, finally there could be
again several vertices from one of the two subgraphs $G_x$ or $H$.
Using that $G_x$ is 3-connected and a case distinction on whether
the two sequences of vertices next to $s$ and $t$, respectively, are
chosen from $H$ or $G_x$, respectively, we obtain that the
alternating part on the circle has to be empty.

Hence, $G_x$ has to be inserted next to $s$ or $t$. Now, Condition
\ref{CASE:porous2} implies that $G_x$ must be inserted right between
$s$ and $t$. Hence, the only edge that could be inserted into the
drawing would be an edge crossing $\{s,t\}$ which is prohibited by
Condition \ref{CASE:porous2} and its symmetric counter part.
\item[$\mathbf \Rightarrow$:] Assume again that we have already merged
  some nodes of the SPQR-tree obtaining a maximal outer-fan-planar
  graph $H$ and that we next want to merge $H$ with a skeleton $G_x$
  at a virtual edge $\{s,t\}$ obtaining a graph $H'$. Note that in a
  maximal outer-fan-planar drawing all outer edges have to be
  present. This implies especially, that the vertices of $G_x$ have to
  be consecutive in an outer-fan-planar drawing of $H'$ with $s$ and
  $t$ being the first and the last vertex  and that we cannot draw the
  skeletons of two nodes adjacent to one $P$-node on the same side of
  the respective virtual edge. Otherwise, we obtain the situation
  indicated in Fig.~\ref{fig:porous2}. This implies \ref{CASE:Rnode},
  \ref{CASE:porous2}, and its symmetric counterpart.  Moreover, all
  virtual edges have to be real edges which implies \ref{CASE:RS} and,
  combined with the previous observation, also \ref{CASE:Pnode}. If
  the skeleton of an $S$-node would be a cycle of length greater than
  three, we could add chords, contradicting maximality. Hence,
  \ref{CASE:Snode} is fulfilled. Finally, when combining two
  components in parallel, we should not be able to add an edge from
  one component to the other routed over the virtual edge. This
  implies \ref{CASE:porous1}.\qed
\end{description}
\end{proof}

\section{The NP-hardness of the \fpfrs{} Problem}
\label{sec:NPhard}

In this section, we study the \fpfrs{} problem (FP-FRS), that is,
the problem of deciding whether a graph $G = (V, E)$ with a fixed
rotation system $\mathcal{R}$ admits a fan-planar drawing preserving
$\mathcal{R}$.

\begin{theorem}
\fpfrs{} is NP-hard.
\end{theorem}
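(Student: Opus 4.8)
The plan is to reduce from the strongly NP-complete \threepart{} problem. Recall that an instance consists of $3m$ positive integers $a_1,\dots,a_{3m}$ and a bound $B$ with $\sum_i a_i = mB$ and $B/4 < a_i < B/2$ for every $i$, and the question is whether the $a_i$ can be partitioned into $m$ triples each summing to $B$. Because \threepart{} is NP-hard already when the integers are given in unary, I am allowed to build gadgets whose size is proportional to the $a_i$ and to $B$, and the resulting instance $(G,\mathcal{R})$ of \fpfrs{} will still have size polynomial in the input. The size restriction $B/4 < a_i < B/2$ is used exactly as in the classical argument: any collection of items whose sizes sum to $B$ must consist of precisely three items, so I only have to force a ``bin capacity of $B$'' and the triple structure comes for free.

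Next I would describe the gadgets. First a rigid \emph{frame}: a subgraph (for example a triangulated cycle or wheel-like structure) whose prescribed rotation system $\mathcal{R}$ admits an essentially unique planar embedding, so that in every drawing respecting $\mathcal{R}$ its faces appear in one fixed cyclic order and its edges cannot be crossed, since any edge crossing a frame edge would force one of the forbidden patterns of Fig.~\ref{fig:fanplanar1}--\ref{fig:fanplanar2}. The frame carves out $m$ disjoint \emph{bin} regions, each equipped with $B$ \emph{slot} vertices and a boundary through which edges must enter. For every item $i$ I add an \emph{item gadget}: a vertex $c_i$ incident to $a_i$ edges that, by the prescribed rotation at $c_i$, form a consecutive fan whose only admissible targets are the slot vertices inside the bins. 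The key local constraints are: (i) all $a_i$ edges of one item must enter the \emph{same} bin, because the rotation system forbids splitting a consecutive fan across the frame without crossing an uncrossable frame edge; (ii) two \emph{different} items may not send edges across a common boundary edge, since those edges would be independent and violate forbidden pattern~I of Fig.~\ref{fig:fanplanar1}; and (iii) each slot can receive at most one edge. Together these force a bin to behave exactly as a container of capacity $B$ that an item of weight $a_i$ fills by $a_i$.

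With the gadgets in place the equivalence is the routine part. Given a valid 3-partition, I place the three items of each triple into the corresponding bin, route each item's fan as a single legal fan-crossing toward $a_i$ distinct slots, and verify that no forbidden pattern arises, obtaining a fan-planar drawing respecting $\mathcal{R}$. Conversely, from any fan-planar drawing respecting $\mathcal{R}$ I read off, for each item, the unique bin that its fan enters (well defined by~(i)); constraints~(iii) and the $B$ slots per bin give that the total weight placed in each bin is at most $B$, and since $\sum_i a_i = mB$ every bin must be filled to exactly $B$. The bound $B/4 < a_i < B/2$ then forces exactly three items per bin, yielding the desired partition.

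The main obstacle I expect is the design and verification of the capacity/slot gadget, specifically making fan-planarity rather than full planarity carry the numeric capacity. Since $\mathcal{R}$ fixes only the cyclic edge order around each vertex and \emph{not} the sequence of crossings along each edge, the drawing still has genuine routing freedom, and I must argue that the frame together with the fan constraints leaves no illegal shortcut: that an item cannot gain capacity by crossing a slot boundary twice, and that two logically distinct items cannot be disguised as one admissible fan by being routed across a shared boundary edge. Pinning these down---essentially proving that every drawing respecting $\mathcal{R}$ is forced into the intended combinatorial shape by the forbidden patterns of Fig.~\ref{fig:fanplanar}---is where the real work lies; the polynomial size of the reduction and the $\Leftarrow$ direction are comparatively straightforward.
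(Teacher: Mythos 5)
Your choice of source problem (\threepart{} in unary) and your general strategy --- rigid barrier substructures whose edges cannot be crossed freely, plus forbidden pattern~I to limit how many independent edges a single edge may cross --- do match the paper's reduction at a high level. The gap is that the capacity gadget, which you yourself identify as ``where the real work lies,'' is never actually constructed, and the sketch you give of it does not work as stated. First, in an instance of \fpfrs{} the graph is fixed, so every edge incident to your item vertex $c_i$ has a fixed other endpoint; there is no such thing as a fan ``whose only admissible targets are the slot vertices inside the bins.'' Either $c_i$ is joined to specific slot vertices, in which case the item-to-bin assignment is hard-coded into the graph and nothing is left for the drawing to decide, or the choice of bin must be encoded by some other topological freedom, and you have not supplied one. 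Second, constraint~(iii), ``each slot can receive at most one edge,'' is not something fan-planarity can enforce: if the slots are vertices, their degrees are determined by the graph; if the slots are edges to be crossed, fan-planarity explicitly \emph{allows} one edge to be crossed by arbitrarily many edges of a single fan, so $B$ slot edges per bin would not bound the weight placed in that bin. Thus the numeric capacity --- the entire content of the reduction --- is not pinned down.

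For comparison, the paper resolves exactly these two points as follows. The residual ``choice'' is carried not by the items but by $m$ \emph{transversal paths}, each of length exactly $(3m-3)K+B$, joining two fixed vertices $u$ and $v$ on opposite walls; the rotation system leaves open which cell of each column such a path threads through. The numbers $a_i$ are encoded as the number of pairwise disjoint \emph{vertical edges} in the central cell of column $i$, while every non-central cell has $K=\lceil B/2\rceil+1>a_i$ such edges, and the capacity bound comes from forbidden pattern~I applied in the direction opposite to yours: a single path edge cannot cross two independent vertical edges, so a path with $L$ edges can cross at most $L$ pairwise disjoint edges. A counting argument then forces each path to traverse exactly three central cells whose sizes sum to $B$, yielding the partition. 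To rescue your bin/slot picture you would need an analogous quantitative mechanism; as written, the proposal is a plan whose central gadget is missing.
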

\begin{proof}

We prove the statement by using a reduction from \threepart{} (3P).
An instance of 3P is a multi-set $A = \{a_1, a_2, \ldots, a_{3m} \}$
of $3m$ positive integers in the range $(B/4, B/2)$, where $B$ is an
integer such that $\sum_{i=1}^{3m} a_i = mB$. 3P asks whether $A$
can be partitioned into $m$ subsets $A_1, A_2, \ldots, A_m$, each of
cardinality $3$, such that the sum of the numbers in each subset is
$B$. As 3P is \emph{strongly} NP-hard~\cite{gj-cigtnpc-79}, it is
not restrictive to assume that $B$ is bounded by a polynomial in
$m$.

Given an instance $A$ of 3P, we show how to transform it into an
instance $\langle G_A, \mathcal{R}_A \rangle$ of FP-FRS, by a
polynomial-time transformation, in such a way that the former is a
\emph{Yes}-instance of 3P if and only if the latter is a
\emph{Yes}-instance of FP-FRS.

Before describing our transformation in detail, we need to introduce
the concept of \emph{barrier gadget}. An $n$-vertex \emph{barrier
gadget} is a graph consisting of a cycle of $n \geq 5$ vertices plus
all its $2$-hop edges; a barrier gadget is therefore a maximal
outer-$2$-planar graph. We make use of barrier gadgets in order to
constraint the routes of some specific paths of $G_A$, as will be
clarified soon. We exploit the following property of barrier
gadgets. Let $G$ be a biconnected fan-planar graph containing a
barrier subgraph $G_b$, and let $\Gamma$ be a fan-planar drawing of
$G$ such that drawing $\Gamma_b$ of $G_b$ in $\Gamma$ is maximal
outer-$2$-planar. Then, no path $\pi$ of $G - G_b$ can enter inside
the boundary cycle of $\Gamma_b$ and cross a $2$-hop edge. Indeed,
every $2$-hop edge $e_b$ of $\Gamma_b$ is crossed by two other
$2$-hop edges having an end-vertex in common, hence if $e_b$ were
crossed by $\pi$, then $e_b$ would be crossed by two independent
edges. On the other hand, if path $\pi$ enters inside $\Gamma_b$
without crossing any $2$-hop edge, then it must cross twice a same
boundary edge $e_b'$ because of the biconnectivity of $G$; namely,
if path $\pi$ enters in $\Gamma_b$, then it must also exit from it
passing through the same boundary edge. In this case, the only
possibility that preserves the fan-planarity of $\Gamma$ is that
$\pi$ crosses $e_b'$ with two consecutive edges, thus forming a
fan-crossing. Otherwise, $e_b'$ would be crossed either by two
independent edges of $\pi$ or by a same edge of $\pi$ twice, but
both these cases are not allowed in a (simple) fan-planar drawing.

Now, we are ready to describe how to transform an instance $A$ of 3P
into an instance $\langle G_A, \mathcal{R}_A \rangle$ of FP-FRS. We
start from the construction of graph $G_A$ which will be always
biconnected. First of all, we create a \emph{global ring barrier} by
attaching four barrier gadgets $G_t$, $G_r$, $G_b$ and $G_l$ as
depicted in Figure~\ref{fi:3P-reduction}. $G_t$ is called the
\emph{top beam} and contains exactly $3mK$ vertices, where $K =
\lceil B/2 \rceil + 1$. $G_r$ is the \emph{right wall} and has only
five vertices. $G_b$ and $G_r$ are called the \emph{bottom beam} and
the \emph{left wall}, respectively, and they are defined in a
specular way. Observe that $G_t$, $G_r$, $G_b$ and $G_l$ can be
embedded so that all their vertices are linkable to points within
the closed region delimited by the global ring barrier. Then, we
connect the top and bottom beams by a set of $3m$ \emph{columns},
see Figure~\ref{fi:3P-reduction} for an illustration of the case $m
= 3$. Each \emph{column} consists of a stack of $2m-1$ \emph{cells};
a \emph{cell} consists of a set of pairwise disjoint edges, called
the \emph{vertical edges} of that cell. In particular, there are
$m-1$ \emph{bottommost cells}, one \emph{central cell} and $m-1$
\emph{topmost cells}. Cells of a same column are separated by $2m-2$
barrier gadgets, called \emph{floors}. Central cells (that are $3m$
in total) have a number of vertical edges depending on the elements
of $A$. Precisely, the central cell $C_i$ of the $i$-th column
contains $a_i$ vertical edges connecting its delimiting floors ($i
\in \{1,2, ..., 3m\}$). Instead, all the remaining cells have, each
one, $K$ vertical edges. Hence, a non-central cell contains more
edges than any central cell. Further, the number of vertices of a
floor is given by the number of its incident vertical edges minus
two. Let $u$ and $v$ be the ``central'' vertices of the left and
right walls, respectively (see also Figure~\ref{fi:3P-reduction}).
We conclude the construction of graph $G_A$ by connecting vertices
$u$ and $v$ with $m$ pairwise internally disjoint paths, called the
\emph{transversal paths} of $G_A$; each transversal path has exactly
$(3m-3)K+B$ edges.

Concerning the choice of a rotation system $\mathcal{R}_A$, we
define a cyclic ordering of edges around each vertex that is
compatible with the following constraints: (i) every barrier gadget
can be embedded with all its $2$-hop edges inside its boundary
cycle; (ii) the global ring barrier can be embedded with only four
vertices on the outer face; (iii) columns can be embedded inside the
region delimited by the global ring barrier without crossing each
other; (iv) vertical edges of cells can be embedded without creating
crossings; (v) transversal paths are attached to the left and right
walls such that the ordering of their edges around $u$ is specular
to the ordering around $v$; this choice makes it possible to avoid
crossings between any two transversal paths. From what said, it is
straightforward to see that an instance of 3P can be transformed
into an instance of FP-FRS in polynomial time in $m$.

\begin{figure}[t!]
    \centering
    \includegraphics[page=1, width=.85\textwidth]{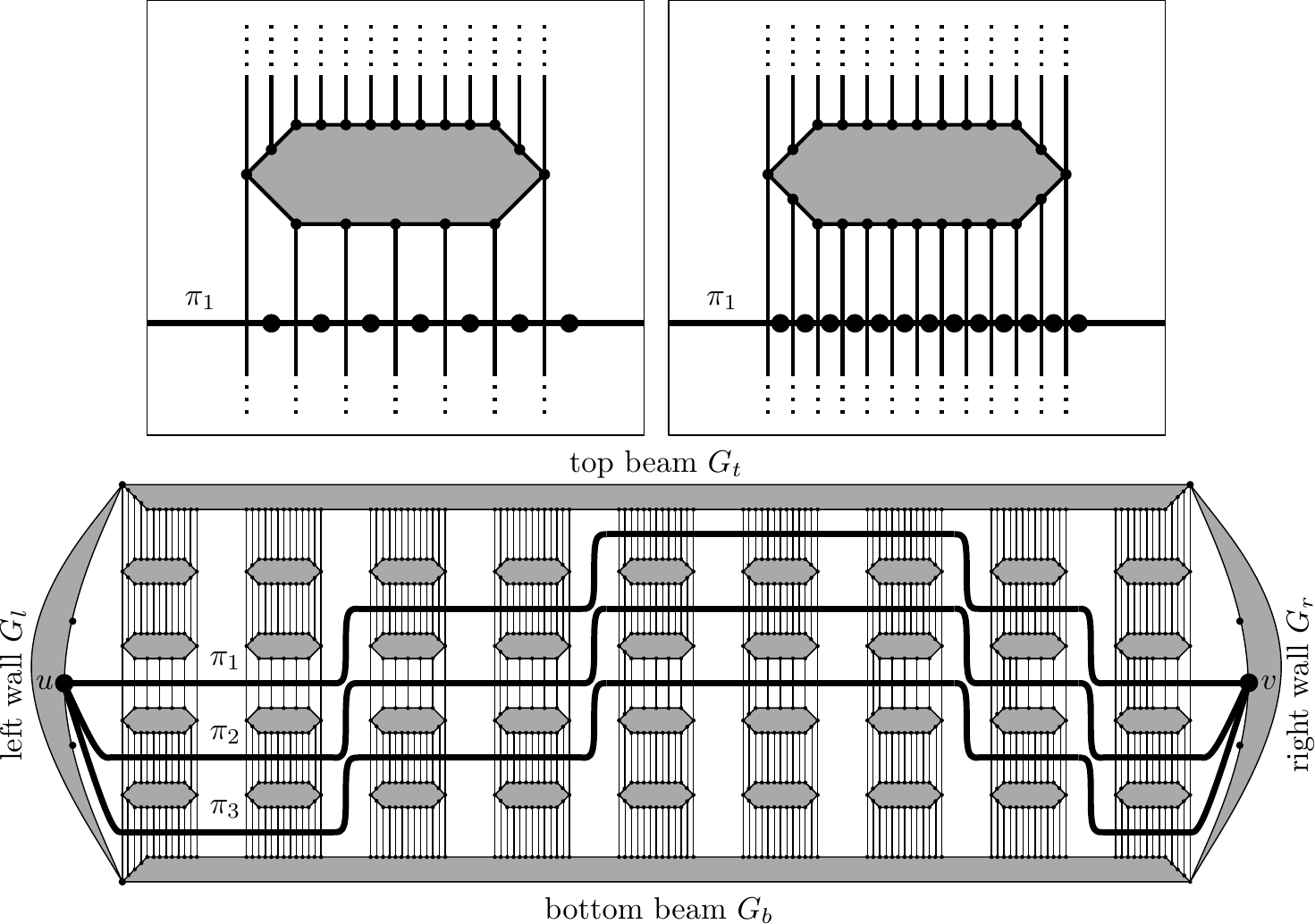}
    \caption{Illustration of the reduction of FP-FRS from 3P, where $m=3$, $A=\{7,7,7,8,8,8,8,9,10\}$ and $B = 24$. Transversal paths are routed according to the following solution of 3P: $A_1=\{7,7,10\}$, $A_2=\{7,8,9\}$ and $A_3=\{8,8,8\}$. The top-left and top-right boxes show a zoom of the first central cell and of the non-central cell of the $3$-rd column traversed by path $\pi_1$, respectively.}
    \label{fi:3P-reduction}
\end{figure}

We now prove that a \emph{Yes}-instance of 3P is transformed into a
\emph{Yes}-instance of FP-FRS, and vice-versa. Let $A$ be a
\emph{Yes}-instance of 3P, we show that $\langle G_A, \mathcal{R}_A
\rangle$ admits a fan-planar drawing $\Gamma_A$ preserving
$\mathcal{R_A}$. We preliminarily observe that such a drawing is
easy to compute if one omits all the transversal paths. It is
essentially a drawing like that one depicted in
Figure~\ref{fi:3P-reduction}, where columns are one next to the
other within the closed region delimited by the global ring barrier.
However, by exploiting a solution $\{A_1, A_2, \ldots, A_m\}$ of 3P
for the instance $A$, also the transversal paths can be easily
embedded without violating the fan-planarity. The idea is to route
these paths in such a way that: (R.1) they do not cross each other;
(R.2) they do not cross any barrier; (R.3) each path passes through
exactly $3$ central cells and $3m-3$ non-central cells; and (R.4)
each cell is traversed by at most one path. More precisely, each
transversal path $\pi_j$ is biunivocally associated with a subset
$A_j$ ($j \in \{1, 2, \ldots, m\}$) and the three central cells it
passes through have three sets of vertical edges whose cardinalities
form a triple of integers matching the three integers of $A_j$.
Paths are routed by sweeping columns from left to right and by
proceeding as follows. Let $C_1$ be the $1$-st central cell; $C_1$
has $a_1$ vertical edges by construction. The transversal path
passing through $C_1$ is a path $\pi_{j(1)}$ such that subset
$A_{j(1)}$ contains an integer equal to $a_1$. The remaining
transversal paths are routed until the $1$-st column by preserving
the cyclic edge-ordering around $u$ and by respecting conditions
R.1, R.2, R.3 and R.4; note that condition R.3 cannot be violated at
this point. Suppose now that all paths have been already routed
until the $(i-1)$-th column, for some $i \geq 2$, and suppose also
that conditions R.k ($1 \leq  k \leq 4$) are satisfied. Then, there
is at least a path $\pi_{j(i)}$ whose corresponding subset
$A_{j(i)}$ contains an integer $a_i$ that has not yet been
considered. Path $\pi_{j(i)}$ is the next path that goes through the
central cell $C_i$. The remaining paths are routed in such a way
that their ``vertical distances'' to path $\pi_{j(i)}$, in terms of
number of cells, are unchanged when passing from the $(i-1)$-th
column to the $i$-th column. Eventually, each transversal path
crosses exactly $(3m-3)K+B$ vertical edges, which is the same number
of its edges. Therefore, it is possible to draw these paths by
ensuring that each of their edges crosses exactly one vertical edge,
which preserves the fan-planarity. Hence, eventually we get a
fan-planar drawing $\Gamma_A$ preserving the rotation system
$\mathcal{R}_A$.

We conclude the proof by showing that if $\langle G_A, \mathcal{R}_A
\rangle$ is a \emph{Yes}-instance of FP-FRS, then $A$ is a
\emph{Yes}-instance of 3P. Let $\Gamma_A$ be a fan-planar drawing of
$G_A$ preserving the rotation system $\mathcal{R}_A$. We first
observe that the top beam and the bottom beam are disjoint,
otherwise there would be at least a 2-hope edge in one beam that is
crossed by another edge of the other beam, thus violating the
fan-planarity. We also note that columns can partially cross each
other, but this does not actually affect the validity of the proof.
Indeed, an edge $e$ of a column $L$ might cross an edge $e'$ of
another column $L'$ only if $e$ is incident to a vertex in the
rightmost (leftmost) side of $L$, $e'$ is a leftmost (rightmost)
vertical edge of $L'$, and $L$ and $L'$ are two consecutive columns.
With a similar argument, it is immediate to see that vertices $u$
and $v$ must be separated by all the columns. Therefore, every
transversal path satisfies conditions R.1, R.2 and it must pass
through at least three central cells, if not it would cross a number
of pairwise disjoint edges that is greater than the number of its
edges, hence $\Gamma_A$ would not be fan-planar. On the other hand,
because of condition R.4, which is obviously satisfied, there cannot
be any transversal path passing through more than three central
cells. Otherwise, there would be some other transversal path that
traverses a number of central cells that is strictly less than
three. Hence, also condition R.3 is satisfied. In conclusion, every
transversal path $\pi_j$ ($j \in \{1, 2, \ldots, m\}$) crosses
$(3m-3)K+B$ vertical edges and traverses exactly three central cells
$C_{1j}$, $C_{2j}$ and $C_{3j}$. If $m(C_{1j}), m(C_{2j})$ and
$m(C_{3j})$ denote the number of edges of these cells, then
$m(C_{1j}) + m(C_{2j}) + m(C_{3j}) = B$, because each non-central
cell has $K$ edges. Therefore, the partitioning of $A$ defined by
$A_1, A_2, \ldots, A_m$, where $A_j = \{m(C_{1j}), m(C_{2j}),
m(C_{3j})\}$, is a solution of 3P for the instance $A$. \qed
\end{proof}

\bibliographystyle{splncs03}
\bibliography{references}
\end{document}